\newtheorem{theo}{Theorem}[section]
\newtheorem{lem}[theo]{Lemma}
\newtheorem{prop}[theo]{Proposition}
\newtheorem{cor}[theo]{Corollary}
\newtheorem{lemma}[theo]{Lemma}
 \theoremstyle{definition}
\newtheorem{definition}[theo]{Definition}
\newtheorem{example}[theo]{Example}
 \theoremstyle{remark}
 \newtheorem{rem}[theo]{Remark}
 \numberwithin{equation}{section}
\newtheorem{remark}[theo]{Remark}
\newcommand{\betheo}{\begin{theo}$\!\!\!${\bf } }
\newcommand{\entheo}{\end{theo}}
\newcommand{\becor}{\begin{cor}$\!\!\!$  }
\newcommand{\encor}{\end{cor}}
\newcommand{\belem}{\begin{lem}$\!\!\!${\bf .} }
\newcommand{\enlem}{\end{lem}}
\newcommand{\beprop}{\begin{prop}$\!\!\!${\bf } }
\newcommand{\enprop}{\end{prop}}
\newcommand{\bedefi}{\begin{definition}$\!\!\!$ \rm }
\newcommand{\findefi}{ \end{definition}}
\newcommand{\beex}{\begin{example}$\!\!\!$ \rm }
\newcommand{\enex}{ \end{example}}
\newcommand{\berem}{\begin{remark}$\!\!\!$ \rm }
\newcommand{\enrem}{ \end{remark}}
\newcommand{\be}{\begin{equation}}
\newcommand{\en}{\end{equation}}
\newcommand{\bea}{\begin{eqnarray}}
\newcommand{\ena}{\end{eqnarray}}
\newcommand{\beano}{\begin{eqnarray*}}
\newcommand{\enano}{\end{eqnarray*}}
\newcommand{\bee}{\begin{enumerate}}
\newcommand{\ene}{\end{enumerate}}
\newcommand{\bei}{\begin{itemize}}
\newcommand{\eni}{\end{itemize}}
\newcommand{\betab}{\begin{tabular}}
\newcommand{\entab}{\end{tabular}}
\newcommand{\bd}{\begin{displaymath}}
\newcommand{\ad}{^{\mbox{\scriptsize $\dag$}}}
\newcommand{\das}{^{\dag {\rm\textstyle *}}}
\newcommand{\up}{\raisebox{0.7mm}{$\upharpoonright $}}%
\def\D{{\mathcal D}}
\def\F{{\mathcal F}}
\def\H{{\mathcal H}}
\def\L{{\mathcal L}}
\def\P{{\mathcal P}}
\def\J{\relax\ifmmode {\mathcal J}\else${\mathcal J}$\fi}
\def\x{\relax\ifmmode {\mbox{*}}\else*\fi}
\def\MM{{\mathfrak M}}
\newcommand{\mb}{\mathbb}
\newcommand{\mult}{{\scriptstyle \Box}}
\newcommand{\pa}{partial \mbox{*-algebra}}
\newcommand{\LD}{{\L}\ad(\D)}
\newcommand{\LDH}{{\L}\ad(\D,\H)}
\def\dag{\dagger}
\newcommand{\vp}{\varphi}
\newcommand{\ip}[2]{\left\langle {#1}\left|{#2}\right.\right\rangle}
\def\OL{\relax\ifmmode {\sf L}\else{\textsf L}\fi}
\def\OR{\relax\ifmmode {\sf R}\else{\textsf R}\fi}
\newcommand{\Id}{1\!\!1}
\begin{document}

\title[Weak commutation relations]{Weak commutation relations of unbounded operators: nonlinear extensions}

\author[F.Bagarello]{Fabio Bagarello}
\address{%
Dieetcam, Universit\`a di Palermo, 90128 Palermo, Italy} \email{bagarell@unipa.it}

\author[A.Inoue]{Atsushi Inoue}
\address{Department of Applied Mathematics, Fukuoka University, Fukuoka 814-0180, Japan}
\email{a-inoue@fukuoka-u.ac.jp}

\author[C.Trapani]{Camillo Trapani}
\address{Dipartimento di Matematica e Informatica, Universit\`a di Palermo, I-90123 Palermo, Italy}
\email{trapani@unipa.it} \maketitle

\begin{abstract}
We continue our analysis of the consequences of the commutation relation $[S,T]=\Id$, where $S$ and $T$ are two closable unbounded operators.
The {\em weak} sense of this commutator is given in terms of the inner product of the Hilbert space $\H$ where the operators act. { We also
consider what we call, adopting a physical terminology}, a {\em nonlinear} extension of the above commutation relations.
\end{abstract}

\section{Introduction}
Let $A,B$ be two closed operators with dense domains, $D(A)$ and $D(B)$, in Hilbert space $\H$. In \cite{bit_jmp2011} we have discussed  some
mathematical aspects connected to the formal commutation relation $[A,B]=\Id$. { Since, as it is well known,  $A$ and $B$ cannot be  both
bounded operators, a careful and rigorous analysis is needed. Thus, starting from the very beginning, we require that the identity $AB-BA=\Id$
holds, at least, on a dense domain $\D$ of Hilbert space $\H$. In other words, we assume that there exists a dense subspace $\D$ of $\H$ such
that
\begin{itemize}
\item[(D.1)] $\D \subset D(AB)\cap D(BA)$;
\item[(D.2)] $AB\xi-BA\xi = \xi, \quad \forall \xi \in \D$,
where, as usual, $D(AB)=\{ \xi \in D(B):\, B\xi \in D(A)\}$.
\end{itemize}
As we did in \cite{bit_jmp2011} we will suppose that
\begin{itemize}
\item[(D.3)] $\D \subset D(A^*)\cap D(B^*)$.
\end{itemize}
(D.1), (D.2) and (D.3) then imply }that the operators $S:=A\upharpoonright \D$ and $T:= B\upharpoonright D$ are elements of the partial
*-algebra $\LDH$ and satisfy the equality
$$ \ip{T\xi}{S\ad \eta}-\ip{S\xi}{T\ad \eta}= \ip{\xi}{\eta}, \quad \forall \xi, \eta \in \D.$$
We recall that $ \L\ad(\D,\H) $ denotes the set of all (closable) linear operators $X$ such that $ {D}(X) = {\D},\; {D}(X\x) \supseteq {\D}.$
The set $ \L\ad(\D,\H ) $ is a  \pa\
 with respect to the usual sum $X_1 + X_2 $,
the scalar multiplication $\lambda X$, the involution $ X \mapsto X\ad := X\x \up {\D}$ and the \emph{(weak)} partial multiplication $X_1 \mult
X_2 = {X_1}\ad\x X_2$, defined whenever $X_2$ is a weak right multiplier of $X_1$ (we shall write $X_2 \in R^{\rm w}(X_1)$ or $X_1 \in L^{\rm
w}(X_2)$), that is, whenever $ X_2 {\D} \subset {\D}({X_1}\ad\x)$ and  $ X_1\x {\D} \subset {\D}(X_2\x)$, \cite{ait_book}.

\medskip
Let $t\mapsto V(t)$, $t\geq 0$ be a semigroup of bounded operators in Hilbert space. We recall that $V$ is weakly (or, equivalently, strongly)
continuous if
$$ \lim_{t\to {t_0}}\ip{V(t)\xi}{\eta}= \ip{V(t_0)\xi}{\eta} , \quad \forall \xi, \eta \in \H.$$

A closed operator $X$ is the generator of $V(t)$ if
$$ D(X)= \left\{\xi\in \H;\exists\, \xi'\in \H: \lim_{t\to 0} \ip{\frac{V(t)-\Id}{t}\xi}{\eta}=\ip{\xi'}{\eta},\; \forall \eta \in \H \right\}$$
and
$$ X\xi = \xi', \quad \forall \xi \in D(X).$$

If $V(t)$ is a weakly continuous semigroup, then $V^*(t):=(V(t))^*$ is also a weakly continuous semigroup and if $X$ is the generator of
$V(t)$, then $X^*$ is the generator of $V^*(t)$.

An operator $X_0\in \LDH$ is the $\D$-generator of a semigroup $V(t)$ if $V(t)$ is generated by some closed extension $X$ of $X_0$ such that
$\overline{X_0}\subset X \subset X_0\das$. The latter condition ensures us that if $X_0$ is the $\D$-generator of $V(t)$, then $X_0\ad$ is the
$\D$-generator of $V^*(t)$.

\medskip
In \cite{bit_jmp2011} we gave the following definition \bedefi Let $S, T \in \LDH$. We say that
\begin{itemize}
\item[(CR.1)] the commutation relation $[S,T]=\Id_{\D}$ is satisfied (in $\LDH$) if, whenever $S\mult T$  is well-defined, $T\mult S$ is well-defined too and $S\mult T-T\mult S= \Id_{\D}$.
 \item[(CR.2)] the commutation relation $[S,T]=\Id_{\D}$ is satisfied in {\em weak sense} if
$$ \ip{T\xi}{S\ad \eta}-\ip{S\xi}{T\ad \eta}= \ip{\xi}{\eta}, \quad \forall \xi, \eta \in \D.$$
\item[(CR.3)] the commutation relation $[S,T]=\Id_{\D}$ is satisfied in {\em quasi-strong sense} if $S$ is the $\D$-generator of a weakly continuous semigroups of bounded operators $V_S(\alpha)$ and
    $$\ip{V_S(\alpha)T\xi}{\eta}-\ip{V_S(\alpha)\xi}{T\ad \eta}=\alpha\ip{V_S(\alpha)\xi}{\eta}, \quad \forall \xi, \eta \in \D; \alpha \geq 0.$$
\item[(CR.4)] the commutation relation $[S,T]=\Id_{\D}$ is satisfied in {\em strong sense} if $S$ and $T$ are $\D$-generators of weakly continuous semigroups of bounded operators $V_S(\alpha), V_T(\beta)$, respectively, satisfying the generalized Weyl commutation relation
    $$ V_S(\alpha)V_T(\beta) = e^{\alpha\beta} V_T(\beta)V_S(\alpha), \quad \forall \alpha, \beta \geq 0.$$
\end{itemize}
\findefi { As shown in \cite{bit_jmp2011}, the following implications hold

$$ \mbox{ {\rm (CR.4)} $\Rightarrow$ {\rm (CR.3)} $\Rightarrow$ {\rm(CR.2)} $\Rightarrow$ {\rm (CR.1)}.}$$
}

Our analysis was motivated by the introduction, on a more physical side, of what have been called {\em pseudo-bosons}, arising from a
particular deformation of the canonical commutation relations, see \cite{bagrep} for a recent review. Later on, one of us (FB) has extended
this notion to the so-called {\em nonlinear pseudo-bosons} , \cite{nlpb}, in which the commutation rule $[S,T]=\Id_\D$ does not hold, in
general, in any of the above meanings.  Section 4 is dedicated to a mathematical treatment of this extension,  while Sections 2 and 3 contain
more results on the {\em linear} case.

\section{Some consequences of (CR.3)}

Assume that $S,T$ satisfy the commutation relation $[S,T]=\Id_{\D}$ in quasi-strong sense; i.e.
 $$\ip{V_S(\alpha)T\xi}{\eta}-\ip{V_S(\alpha)\xi}{T\ad \eta}=\alpha\ip{V_S(\alpha)\xi}{\eta}, \quad \forall \xi, \eta \in \D; \alpha \geq 0.$$

 If we take $\xi=\eta$ and apply the Cauchy-Schwarz inequality, we get, for every $z\in {\mb C}$ and $\alpha\geq 0$,
\begin{equation} \label{UP}\alpha |\ip{V_S(\alpha)\xi}{\xi}|\leq 2 \max\{\|(T-z)\xi\|, \|(T\ad-\overline{z}) \xi\|\} \max\{\|V_S(\alpha)\xi\|, \|V_S(\alpha)^*\xi\| \}.\end{equation}

As an immediate consequence of \eqref{UP}, we get
\begin{prop} Let $S,T$ satisfy the commutation relation $[S,T]=\Id_{\D}$ in quasi-strong sense. Assume that $T=T\ad$. Then $\sigma_p(T)=\emptyset$.
\end{prop}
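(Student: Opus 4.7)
The plan is to argue by contradiction using inequality \eqref{UP} directly. Suppose $\lambda \in \sigma_p(T)$; since $T \in \LDH$ is by definition an operator with domain $\D$, there exists a nonzero $\xi \in \D$ with $T\xi = \lambda \xi$. Because $T = T^\dagger$, the operator $T$ is symmetric (i.e. $T \subset T^*$), so any eigenvalue is real, giving $\overline{\lambda} = \lambda$ and $T^\dagger \xi = T\xi = \lambda \xi$.

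The key step is to specialize inequality \eqref{UP} at this eigenvector, choosing $z = \lambda$. Then
$$\|(T-\lambda)\xi\| = 0 \quad \text{and} \quad \|(T^\dagger - \overline{\lambda})\xi\| = \|(T - \lambda)\xi\| = 0,$$
so the right-hand side of \eqref{UP} vanishes, yielding
$$\alpha\, |\ip{V_S(\alpha)\xi}{\xi}| = 0, \qquad \forall\, \alpha \geq 0.$$
In particular $\ip{V_S(\alpha)\xi}{\xi} = 0$ for every $\alpha > 0$.

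To finish, I would invoke the weak continuity of the semigroup $V_S$: letting $\alpha \to 0^+$ and using $V_S(0) = \Id$ gives
$$0 = \lim_{\alpha \to 0^+} \ip{V_S(\alpha)\xi}{\xi} = \ip{\xi}{\xi} = \|\xi\|^2,$$
forcing $\xi = 0$ and contradicting the assumption that $\xi$ is an eigenvector. Hence $\sigma_p(T) = \emptyset$.

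No step in this argument looks genuinely hard: the only points that require a moment of care are (i) that a putative eigenvector automatically lies in $\D$ because $T$ is being viewed as an element of $\LDH$ (so $D(T) = \D$), and (ii) the symmetry of $T$ guaranteeing reality of $\lambda$. Both are immediate from the hypotheses. The essential mechanism is that the Heisenberg-type uncertainty inequality \eqref{UP} collapses when $z$ is taken to be a real eigenvalue of a symmetric $T$, and weak continuity of $V_S$ at the origin converts the resulting identity into the statement $\|\xi\| = 0$.
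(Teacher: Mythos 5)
Your proof is correct and follows essentially the same route as the paper's: both specialize inequality \eqref{UP} at $z=\overline{z}=\lambda$ and an eigenvector to kill the right-hand side, then let $\alpha\to 0$ and use weak continuity of $V_S$ to conclude $\|\xi\|^2=0$. You merely make explicit two points the paper leaves implicit, namely that symmetry of $T$ forces $\lambda$ to be real and that the eigenvector lies in $\D$.
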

\begin{proof} If $\lambda$ is an eigenvalue of $T$, then the right hand side of \eqref{UP} vanishes for $z=\overline{z}=\lambda$ and $\xi=\xi_0$ a corresponding eigenvector. Hence,  $|\ip{V_S(\alpha)\xi_0}{\xi_0}|=0$. Taking the limit for $\alpha\to 0$, one gets $\|\xi_0\|^2=0$. This is a contradiction.

\end{proof}

 As we know, there exist $M>0$ and $\omega\in {\mb R}$ such that $\|V_S(\alpha)\|\leq Me^{\omega \alpha}$, for $\alpha\geq 0.$

 Let us assume that $V_S$ is uniformly bounded, i.e. $\|V_S(\alpha)\|\leq M$ for every $\alpha\geq 0$, like it happens when $V_S$ is a semigroup of isometries or a semigroup of contractions. Then, by \eqref{UP} it follows that
\begin{equation}\label{limit} \lim_{\alpha\to \infty} |\ip{V_S(\alpha)\xi}{\xi}| =0, \quad \forall \xi \in \D.\end{equation}

 \begin{lemma} \label{lemma_limit} Assume that $V_S$ is uniformly bounded. Then
 $$ \lim_{\alpha\to \infty} |\ip{V_S(\alpha)\xi}{\xi}| =0, \quad \forall \xi \in \H.$$
 \end{lemma}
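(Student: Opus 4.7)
The plan is a straightforward density argument, exploiting the fact that equation \eqref{limit} already gives the desired conclusion for vectors in the dense subspace $\D$, together with the uniform bound $\|V_S(\alpha)\|\le M$ (and, consequently, $\|V_S(\alpha)^*\|\le M$) which holds for all $\alpha \geq 0$.

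Fix $\xi \in \H$ and $\varepsilon >0$. Since $\D$ is dense in $\H$, I would pick $\eta \in \D$ with $\|\xi - \eta\|$ as small as needed (to be specified). The first step is to decompose the sesquilinear form as
$$ \ip{V_S(\alpha)\xi}{\xi} = \ip{V_S(\alpha)(\xi-\eta)}{\xi} + \ip{V_S(\alpha)\eta}{\xi-\eta} + \ip{V_S(\alpha)\eta}{\eta},$$
and apply the Cauchy--Schwarz inequality together with the uniform bound on $V_S(\alpha)$. This yields the uniform-in-$\alpha$ estimate
$$ |\ip{V_S(\alpha)\xi}{\xi}| \le M\|\xi-\eta\|\,\|\xi\| + M\|\eta\|\,\|\xi-\eta\| + |\ip{V_S(\alpha)\eta}{\eta}|,$$
where the first two terms can be made smaller than $\varepsilon/2$ by choosing $\eta$ sufficiently close to $\xi$ (note that $\|\eta\|\le \|\xi\|+\|\xi-\eta\|$, so this only depends on $\xi$).

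The second step is to use \eqref{limit}: since $\eta \in \D$, there exists $\alpha_0 \ge 0$ such that $|\ip{V_S(\alpha)\eta}{\eta}| < \varepsilon/2$ for every $\alpha \geq \alpha_0$. Combining the two estimates gives $|\ip{V_S(\alpha)\xi}{\xi}| < \varepsilon$ for all $\alpha \ge \alpha_0$, proving the claim.

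There is no real obstacle here; the only point that must be kept in mind is that the argument genuinely uses the uniform boundedness hypothesis $\|V_S(\alpha)\|\le M$ to control the error terms uniformly in $\alpha$, which is precisely the assumption of the lemma. Without it, the approximation in $\D$ would not transfer to an asymptotic statement on all of $\H$.
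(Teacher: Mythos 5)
Your proof is correct and follows essentially the same route as the paper's: a density argument approximating $\xi$ by elements of $\D$, a telescoping decomposition of the sesquilinear form, and the uniform bound $\|V_S(\alpha)\|\leq M$ to control the error terms uniformly in $\alpha$ before invoking \eqref{limit}. The only cosmetic difference is that the paper phrases the conclusion via a $\limsup$ over a sequence $\{\xi_n\}\subset\D$ rather than an explicit $\varepsilon$--$\alpha_0$ argument.
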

\begin{proof} Let $\xi \in \H$ and  $\{\xi_n\}$ a sequence in $\D$ converging to $\xi$. We have:
\begin{align*}
|\ip{V_S(\alpha)\xi}{\xi}& - \ip{V_S(\alpha)\xi_n}{\xi_n}|\\
& = |\ip{V_S(\alpha)\xi}{\xi} - \ip{V_S(\alpha)\xi}{\xi_n} +\ip{V_S(\alpha)\xi}{\xi_n}- \ip{V_S(\alpha)\xi_n}{\xi_n}|\\
&\leq \|V_S(\alpha)\xi\|\|\xi-\xi_n\| + \|\xi-\xi_n\| \|V_S(\alpha)^*\xi_n\|\\
&\leq M (\|\xi\|+\|\xi_n\|)\|\xi-\xi_n\|.
\end{align*}
Hence
$$|\ip{V_S(\alpha)\xi}{\xi}|\leq |\ip{V_S(\alpha)\xi_n}{\xi_n}| + M (\|\xi\|+\|\xi_n\|)\|\xi-\xi_n\|.$$
Thus, by \eqref{limit}, we get
$$ \limsup_{\alpha \to \infty}|\ip{V_S(\alpha)\xi}{\xi}|\leq M (\|\xi\|+\|\xi_n\|)\|\xi-\xi_n\|, \quad \forall n \in {\mb N}.$$
This clearly implies that
$$\lim_{\alpha\to \infty} |\ip{V_S(\alpha)\xi}{\xi}| =0.$$
\end{proof}

\begin{theo} Assume that $V_S$ is a semigroup of contractions (i.e., \linebreak \mbox{$\|V_S(\alpha)\|\leq 1$}), for every $\alpha\geq 0$. Then every eigenvalue of the generator $X\supset S$ of $V_S$ has negative real part.
\end{theo}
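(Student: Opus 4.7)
The plan is to combine the standard fact from semigroup theory that an eigenvector of the generator gives rise to an eigenvector of each $V_S(\alpha)$ (with eigenvalue $e^{\alpha\lambda}$) with the decay statement provided by Lemma \ref{lemma_limit}.

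More precisely, let $\lambda \in \CC$ be an eigenvalue of the generator $X$, and let $\xi_0 \in D(X)$ with $\xi_0 \neq 0$ and $X\xi_0 = \lambda \xi_0$. Since $X$ generates the weakly continuous semigroup $V_S$, a standard computation (differentiating $V_S(\alpha)\xi_0$ with respect to $\alpha$ on $D(X)$, which is invariant under $V_S$) yields
$$
V_S(\alpha)\xi_0 = e^{\alpha \lambda}\xi_0, \qquad \alpha \geq 0.
$$
Taking the inner product with $\xi_0$, we obtain $\ip{V_S(\alpha)\xi_0}{\xi_0} = e^{\alpha \lambda}\|\xi_0\|^2$.

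Now I would invoke Lemma \ref{lemma_limit}: since a semigroup of contractions is uniformly bounded (with constant $M=1$), the lemma applies to every vector in $\H$, in particular to $\xi_0$ (which in general lies in $D(X)$ but not necessarily in $\D$, which is why the lemma's extension from $\D$ to $\H$ is essential here). Therefore
$$
0 = \lim_{\alpha\to\infty}\bigl|\ip{V_S(\alpha)\xi_0}{\xi_0}\bigr| = \|\xi_0\|^2 \lim_{\alpha\to\infty} e^{\alpha\,\mathrm{Re}\,\lambda}.
$$
Since $\|\xi_0\|\neq 0$, this forces $\lim_{\alpha\to\infty} e^{\alpha\,\mathrm{Re}\,\lambda} = 0$, which holds if and only if $\mathrm{Re}\,\lambda < 0$.

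The only real step is the reduction to Lemma \ref{lemma_limit}; the obstacle one might worry about is that $\xi_0$ need not belong to $\D$, so the estimate \eqref{UP} does not apply directly to the eigenvector. This is precisely what Lemma \ref{lemma_limit} takes care of by a density/approximation argument, so no extra work is needed.
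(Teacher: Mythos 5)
Your proof is correct and follows essentially the same route as the paper: both reduce the theorem to the identity $V_S(\alpha)\xi_0 = e^{\alpha\lambda}\xi_0$ for an eigenvector $\xi_0$ and then invoke Lemma \ref{lemma_limit} to force $e^{\alpha\,\mathrm{Re}\,\lambda}\|\xi_0\|^2 \to 0$, hence $\mathrm{Re}\,\lambda<0$. The only (inessential) difference is that the paper justifies $V_S(\alpha)\xi_0 = e^{\alpha\lambda}\xi_0$ via the Hille--Yosida approximants $e^{\alpha X(I-\epsilon X)^{-1}}$ rather than by differentiating the orbit, and your observation that $\xi_0$ need not lie in $\D$ correctly identifies why the extension of the decay statement from $\D$ to all of $\H$ in Lemma \ref{lemma_limit} is needed.
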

\begin{proof} Assume that $\lambda \in {\mb C}$ is an eigenvalue of $X$. Then,  there exists $\xi \in D(X)\setminus\{0\}$ such that $X\xi= \lambda \xi$.
The Hille-Yosida theorem then implies that
$$V_S(\alpha)\xi = \lim_{\epsilon \to 0} e^{\alpha X(I-\epsilon X)^{-1}}\xi, \quad \forall \alpha \geq 0.$$
An easy computation shows that $e^{\alpha X(I-\epsilon X)^{-1}}\xi = e^{\alpha \lambda(1-\epsilon \lambda)^{-1}}\xi\to e^{\alpha \lambda}\xi$
as $\epsilon \to 0$. By Lemma \ref{lemma_limit} we conclude that $\Re(\lambda)<0$.

\end{proof}
As a special case we obtain a result already proved by Miyamoto (under additional conditions), \cite{miyamoto}.
\begin{cor} Assume that the generator $X$ of $V_S$ has the form $X=iH$ where $H$ is a self-adjoint operator. Then $\sigma_p (H)=\emptyset$.
\end{cor}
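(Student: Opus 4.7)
The plan is to reduce the corollary to the preceding theorem by exploiting the spectral correspondence between $H$ and $iH$.

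First, I would verify that the hypothesis of the theorem is in force: since $H$ is self-adjoint, Stone's theorem says that $iH$ generates the one-parameter unitary group $\alpha \mapsto e^{i\alpha H}$. In particular, restricting to $\alpha \geq 0$ gives a semigroup of unitaries, which are isometries and therefore contractions, so $\|V_S(\alpha)\| = 1 \leq 1$ for every $\alpha \geq 0$. Hence the hypothesis of the previous theorem is satisfied with $X = iH$.

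Next, I would transfer eigenvalues from $H$ to $X = iH$. Suppose, toward a contradiction, that $\sigma_p(H) \neq \emptyset$, and pick $\mu \in \sigma_p(H)$ with corresponding eigenvector $\xi \in D(H) \setminus \{0\}$. Since $H$ is self-adjoint, $\mu \in \RR$. Then $\xi \in D(iH) = D(X)$ and $X\xi = iH\xi = (i\mu)\xi$, so $i\mu$ is an eigenvalue of $X$.

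Finally, I would apply the theorem to conclude $\Re(i\mu) < 0$. But $\Re(i\mu) = -\Im(\mu) = 0$ because $\mu$ is real, which contradicts the strict inequality. Therefore no such $\mu$ exists and $\sigma_p(H) = \emptyset$.

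I do not anticipate any real obstacle here; the corollary is essentially a one-line consequence of the theorem once the unitary-group hypothesis is identified. The only subtlety is making sure the eigenvector of $H$ lies in $D(X)$, but this is automatic since $D(iH) = D(H)$.
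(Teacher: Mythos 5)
Your proof is correct and follows exactly the route the paper intends: the paper gives no explicit proof of this corollary, presenting it as an immediate consequence of the preceding theorem, and your argument (unitarity of $e^{i\alpha H}$ gives the contraction hypothesis; an eigenvalue $\mu\in\RR$ of $H$ would give the eigenvalue $i\mu$ of $X$ with $\Re(i\mu)=0$, contradicting the theorem's strict negativity) is precisely that consequence spelled out.
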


\section{Weyl extensions}
\bedefi Let $S,T$ be symmetric operators of $\LDH$. We say that $\{S,T\}$ satisfy the {\em weak Weyl commutation relation} if there exists a
self-adjoint extension $H$ of $S$ such that
\begin{itemize}
\item[({\sf ww}$_1$)]$D(\overline{T})\subset D(H)$;
\item[({\sf ww}$_2$)]$\ip{e^{-itH}\xi}{T\eta}=\ip{(T+t)\xi}{e^{itH}\eta}, \quad \forall \xi, \eta \in \D; t\in {\mb R}$.
\end{itemize}
Then $H$ is called the weak Weyl extension of $S$ (with respect to $T$). \findefi

\berem We note that we do not assume that $e^{itH}D(\overline{T}) \subset D(\overline{T}).$ \enrem \berem From ({\sf ww}$_2$) it follows that
$e^{-itH}\xi\in D(T^*)$, for every $\xi \in \D$, $t \in {\mb R}$, and
$$ T^* e^{-itH}\xi = e^{-itH}(T+t)\xi, \quad \forall \xi \in \D, t \in {\mb R}.$$

\enrem
\begin{prop} \label{prop_2.3} Let $\{S,T\}$ satisfy the weak Weyl commutation relation and let $H$ be the weak Weyl extension of $S$. The following statements hold.
\begin{itemize}
\item[(i)] Suppose that $T$ is essentially self-adjoint. Then $\{H, \overline{T}\}$ satisfy the the Weyl commutation relation, that is,
\begin{equation}\label{WCR} e^{itH}e^{-is\overline{T}} = e^{-its}e^{-is\overline{T}}e^{itH}, \quad \forall s,t \in {\mb R} \end{equation}
\item[(ii)] If $H$ is semibounded, then $T$ is not essentially self-adjoint.
\end{itemize}
\end{prop}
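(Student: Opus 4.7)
The plan for (i) is to first lift the identity from the remark, $T^* e^{-itH}\xi=e^{-itH}(T+t)\xi$ for $\xi\in\D$, to all of $D(\overline T)$. Since $T$ is essentially self-adjoint, $T^*=\overline T$ and $\D$ is a core for $\overline T$; given $\xi\in D(\overline T)$, pick $\xi_n\in\D$ with $\xi_n\to\xi$ and $T\xi_n\to\overline T\xi$. Then $e^{-itH}\xi_n\to e^{-itH}\xi$ and $\overline T\,e^{-itH}\xi_n=e^{-itH}(T+t)\xi_n\to e^{-itH}(\overline T+t)\xi$; closedness of $\overline T$ yields $e^{-itH}\xi\in D(\overline T)$ together with
$$\overline T\,e^{-itH}\xi=e^{-itH}(\overline T+t)\xi,\qquad \xi\in D(\overline T),\ t\in\RR.$$
Replacing $t$ by $-t$ shows that $e^{\pm itH}$ both leave $D(\overline T)$ invariant.

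Next, for fixed $t\in\RR$, I would introduce the strongly continuous unitary group $U(s):=e^{itH}e^{-is\overline T}e^{-itH}$. Invariance of $D(\overline T)$ under $e^{-itH}$ and under $e^{-is\overline T}$ (the latter by the Borel functional calculus of $\overline T$) makes $D(\overline T)$ a common dense invariant set on which
$$\left.\frac{d}{ds}\right|_{s=0}U(s)\xi=-i\,e^{itH}\overline T\,e^{-itH}\xi=-i(\overline T+t)\xi.$$
Thus the generator of $U$ agrees with $-i(\overline T+t)$ on a core, and Stone's uniqueness theorem identifies $U(s)=e^{-is(\overline T+t)}=e^{-its}e^{-is\overline T}$. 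Rearranging gives \eqref{WCR}.

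For (ii) I would argue by contradiction: if $T$ were essentially self-adjoint, part (i) would force \eqref{WCR}, which rewrites as
$$e^{-is\overline T}e^{itH}e^{is\overline T}=e^{its}e^{itH}=e^{it(H+s)},\qquad s,t\in\RR.$$
Stone's uniqueness theorem in the variable $t$ then yields the self-adjoint operator identity $e^{-is\overline T}He^{is\overline T}=H+s$. Hence $H+s$ is unitarily equivalent to $H$ for every real $s$, so $\sigma(H)=\sigma(H)+s$, and thus $\sigma(H)=\RR$. This contradicts the semiboundedness of $H$.

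The delicate step is the first one: verifying that $D(\overline T)$ is invariant under $e^{\pm itH}$ and that the intertwining on $\D$ extends to this domain. Once that is secured, both the generator identification for $U(s)$ in part (i) and the spectral translation argument in part (ii) reduce to routine applications of Stone's theorem; essential self-adjointness is used precisely to make $\D$ a core for $\overline T$, which is what drives the whole argument.
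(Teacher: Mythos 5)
Your argument is correct, and it differs from the paper's in instructive ways. For (i) the paper stays at the level of sesquilinear forms: from ({\sf ww}$_1$)--({\sf ww}$_2$) it extends the identity $\ip{e^{-itH}\xi}{\overline{T}\eta}=\ip{(\overline{T}+t)\xi}{e^{itH}\eta}$ to all $\xi,\eta\in D(\overline{T})$ and then invokes the functional calculus to replace $\overline T$ by $e^{-is\overline T}$, whereas you upgrade the intertwining to a strong operator identity $\overline T e^{-itH}\xi=e^{-itH}(\overline T+t)\xi$ on $D(\overline T)$ (via closedness of $\overline T$ and the core property of $\D$, both available exactly because $T$ is essentially self-adjoint) and then identify the generator of $U(s)=e^{itH}e^{-is\overline T}e^{-itH}$ by Stone's theorem. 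Note that the paper explicitly refrains from \emph{assuming} $e^{-itH}D(\overline T)\subset D(\overline T)$; you correctly \emph{derive} it under the hypothesis of (i), and since $\overline T+t$ is already self-adjoint on $D(\overline T)$, the self-adjoint generator of $U$ that extends it must coincide with it --- no separate core argument is really needed there. For (ii) the routes genuinely diverge: the paper (following Arai) derives the quadratic form identity $\ip{e^{-is\overline{T}}\xi}{He^{-is\overline{T}}\xi}=\ip{H\xi}{\xi}-s$ and lets $s\to\pm\infty$, while you read off from the Weyl relation the unitary equivalence $e^{-is\overline T}He^{is\overline T}=H+s$, hence $\sigma(H)=\sigma(H)+s$ for all $s$ and so $\sigma(H)=\RR$. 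Your spectral version is slightly stronger in content (translation invariance of the whole spectrum, the classical von Neumann--Weyl obstruction), while the paper's form computation is more elementary, needing only ({\sf ww}$_1$) to guarantee $e^{-is\overline T}\xi\in D(H)$. Both are complete proofs.
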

\begin{proof} (i): By ({\sf ww}$_1$) and ({\sf ww}$_2$) it follows that
$$\ip{e^{-itH}\xi}{\overline{T}\eta}=\ip{(\overline{T}+t)\xi}{e^{itH}\eta}, \quad \forall \xi, \eta \in D(\overline{T}); t\in {\mb R}.$$
Then, by the functional calculus, we get
$$\ip{e^{-itH}\xi}{e^{-is\overline{T}}\eta}=\ip{e^{-is(\overline{T}+t)}\xi}{e^{itH}\eta}, \quad \forall \xi, \eta \in \D; \forall s,t \in {\mb R}.$$
This, in turn, easily implies \eqref{WCR}.

(ii): Suppose that $T$ is essentially self-adjoint. Similarly to the proof of \cite[Theorem 2.7]{arai}, we have, by (i),
$$\ip{e^{-is\overline{T}}\xi}{H\eta}=\ip{H\xi}{e^{is\overline{T}}\eta}- s\ip{e^{-is\overline{T}}\xi}{\eta}, \quad \forall \xi, \eta \in D(H); s,t\in {\mb R}.$$
Put $\eta= e^{-is\overline{T}}\xi$, $\xi \in \D$, $\|\xi\|=1$. Then $\eta \in D(\overline{T})\subset D(H)$, by ({\sf ww}$_1$), and
$$ \ip{e^{-is\overline{T}}\xi}{He^{-is\overline{T}}\xi}=\ip{H\xi}{\xi}- s.$$

Hence,
$$\sup_{s,t \in {\mb R}} \ip{e^{-is\overline{T}}\xi}{He^{-is\overline{T}}\xi} = +\infty,$$
$$\inf_{s,t \in {\mb R}} \ip{e^{-is\overline{T}}\xi}{He^{-is\overline{T}}\xi} = -\infty.$$
These equalities contradict the semiboundedness of $H$.

\end{proof}
\berem A natural question is the following: When does there exist a semibounded Weyl extension of $S$? If $S$ is a semibounded symmetric
operator with finite deficiency indices, then  any self-adjoint extension of $S$ is bounded below, \cite[Proposition X.3]{reedsimon}. \enrem

\medskip
We now investigate the spectrum of $T$.

\begin{lemma}\label{lemma_4} Suppose that $H$ is bounded below. Then, for every $\beta >0$,
\begin{align*}
& e^{-\beta H}D(\overline{T})\subset D(T^*), \\
& T^* e^{-\beta H}\xi = e^{-\beta H}(\overline{T}-\beta i) \xi, \quad \forall \xi \in D(\overline{T}).
\end{align*}
\end{lemma}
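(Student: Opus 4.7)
The plan is to analytically continue the weak Weyl commutation relation from real $t$ to imaginary time $t=-i\beta$. The key input is that because $H$ is self-adjoint and bounded below, the spectral calculus provides a family $z\mapsto e^{-zH}$ of bounded operators which is strongly analytic on the open right half-plane $\{\Re z>0\}$ and strongly continuous on $\{\Re z\geq 0\}$.

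Fix first $\xi,\eta\in\D$. The Remark following ({\sf ww}$_2$) rewrites the weak Weyl relation as
$$\ip{T\eta}{e^{-itH}\xi} = \ip{\eta}{e^{-itH}T\xi} + t\,\ip{\eta}{e^{-itH}\xi}, \qquad \forall t\in\RR.$$
Setting $z=it$ and introducing
$$F(z) := \ip{T\eta}{e^{-zH}\xi} - \ip{\eta}{e^{-zH}T\xi} + iz\,\ip{\eta}{e^{-zH}\xi},$$
one sees that $F$ is analytic on $\{\Re z>0\}$, continuous on $\{\Re z\geq 0\}$, and vanishes on the imaginary axis by the displayed identity.

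The central step is to deduce $F\equiv 0$ on the closed right half-plane. I would proceed by the Schwarz reflection principle: extend $F$ to the open left half-plane by $\widetilde F(z):=-\overline{F(-\bar z)}$. The extension is continuous on $\CC$, analytic on each open half-plane, and takes the value $0$ on their common boundary. Morera's theorem then shows $\widetilde F$ is entire, and the identity theorem forces $\widetilde F\equiv 0$. Evaluating $F(\beta)=0$ yields, for all $\xi,\eta\in\D$ and every $\beta>0$,
$$\ip{T\eta}{e^{-\beta H}\xi} = \ip{\eta}{e^{-\beta H}(T-i\beta)\xi}.$$

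Finally I would extend this identity by density. Since $T=B\upharpoonright\D$, the subspace $\D$ is automatically a core for $\overline{T}$, so every $\eta\in D(\overline{T})$ is reached by $\eta_n\in\D$ with $\eta_n\to\eta$ and $T\eta_n\to\overline{T}\eta$; boundedness of $e^{-\beta H}$ lets one pass to the limit on both sides. The same approximation applied to $\xi$ (using $e^{-\beta H}T\xi_n\to e^{-\beta H}\overline{T}\xi$) produces
$$\ip{\overline{T}\eta}{e^{-\beta H}\xi} = \ip{\eta}{e^{-\beta H}(\overline{T}-i\beta)\xi}, \qquad \forall \eta,\xi\in D(\overline{T}).$$
Recalling that $T^*=(\overline{T})^*$ since $T$ is closable, this is precisely the assertion that $e^{-\beta H}\xi\in D(T^*)$ and $T^*e^{-\beta H}\xi = e^{-\beta H}(\overline{T}-i\beta)\xi$, as required.

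I expect the main obstacle to be the Schwarz reflection step: one is continuing analytically from the boundary of the region of analyticity, and $F$ admits no direct extension into $\{\Re z<0\}$ because $e^{-zH}$ is unbounded there; the reflection trick circumvents this precisely because the continued function is defined purely from the already-established boundary data.
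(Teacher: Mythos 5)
Your proof is correct and follows essentially the same route as the paper, which gives no details of its own and simply defers to Arai's Theorem 6.2: analytic continuation of the weak Weyl relation from the imaginary axis into the half-plane $\{\Re z>0\}$ (where $e^{-zH}$ is bounded and strongly analytic precisely because $H$ is bounded below), with the reflection/Morera step legitimately upgrading the boundary identity to $F\equiv 0$, followed by a density passage from $\D$ to $D(\overline{T})$ using that $\D$ is by definition a core for $\overline{T}$ and that $e^{-\beta H}$ is bounded. I see no gaps.
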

\begin{proof} This is proved similarly to \cite[Theorem 6.2]{arai}.
\end{proof}

\begin{theo} Suppose that $H$ is semibounded and
\begin{equation}\label{domain} D^\infty(T^*):= \bigcap_{n \in {\mb N}}D(T^{*n}) \subset D(\overline{T}).
\end{equation}
Then $\sigma(T)={\mb C}$.
\end{theo}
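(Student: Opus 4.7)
The plan is to argue by contradiction: assume $\rho(\overline{T})\neq\emptyset$ (so that $\sigma(T)\neq\mathbb{C}$) and combine the hypothesis $D^\infty(T^*)\subset D(\overline{T})$ with Proposition \ref{prop_2.3}(ii) to reach an impossibility. Concretely, the first hypothesis will force $\sigma_p(T^*)\subset\mathbb{R}$, whereas general facts about closed symmetric operators, together with Proposition \ref{prop_2.3}(ii), will force $\sigma_p(T^*)$ to contain a whole nonreal half-plane.

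First I would invoke the standard classification of the spectrum of a closed symmetric operator applied to $\overline{T}$. The condition $\rho(\overline{T})\neq\emptyset$ is possible only when the deficiency indices $(n_+,n_-)$ of $\overline{T}$ satisfy $n_+ n_-=0$, i.e.\ $\overline{T}$ is either self-adjoint (equivalently, $T$ is essentially self-adjoint) or maximally symmetric with exactly one vanishing index. The essentially self-adjoint case is immediately excluded by Proposition \ref{prop_2.3}(ii), since $H$ is semibounded.

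The heart of the argument is the following observation on the point spectrum: the hypothesis $D^\infty(T^*)\subset D(\overline{T})$ implies $\sigma_p(T^*)\subset\mathbb{R}$. Indeed, if $T^*\phi=\mu\phi$ with $\phi\neq 0$, a trivial induction yields $T^{*n}\phi=\mu^n\phi\in D(T^*)$ for every $n\in\mathbb{N}$, so $\phi\in D^\infty(T^*)$; by the hypothesis $\phi\in D(\overline{T})$, and since $\overline{T}\subset T^*$ (by symmetry of $T$), one has $\overline{T}\phi=T^*\phi=\mu\phi$. As $\overline{T}$ is symmetric, $\mu$ must be real.

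It remains to rule out the maximally symmetric case. If the deficiency indices of $\overline{T}$ are, say, $(n,0)$ with $n>0$ (the case $(0,n)$ is symmetric), then $\rho(\overline{T})$ is one open half-plane, and for every $\lambda$ in the opposite open half-plane the classical identity
\begin{equation*}
\ker(T^*-\lambda)=\mathrm{Ran}(\overline{T}-\overline{\lambda})^{\perp}
\end{equation*}
yields $\ker(T^*-\lambda)\neq\{0\}$, so that $\sigma_p(T^*)$ contains an entire nonreal half-plane, contradicting the preceding paragraph. The main obstacle I anticipate is purely bookkeeping: one must keep careful track of the distinction between $T$, $\overline{T}$ and $T^*$, and make sure that the identification $\overline{T}\phi=T^*\phi$ is invoked only on $D(\overline{T})$, where the two closed operators coincide. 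Modulo this care, no further ingredient beyond Proposition \ref{prop_2.3}(ii) and the classical theory of deficiency indices is needed.
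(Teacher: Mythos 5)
Your proof is correct in substance and follows a genuinely different, and in fact shorter, route than the paper's. Both arguments start from the same trichotomy for a closed symmetric operator (the paper quotes Reed--Simon, Theorem X.1; you phrase it via deficiency indices) and both dispose of the self-adjoint case by Proposition \ref{prop_2.3}(ii). The divergence is in how hypothesis \eqref{domain} is exploited. The paper locates a nonzero $\eta$ with $T^*\eta=i\eta$, uses \eqref{domain} to place $\eta$ in $D(\overline{T})$, and then invokes Lemma \ref{lemma_4}: the semigroup $e^{-\beta H}$ maps $\eta$ to an eigenvector of $T^*$ with eigenvalue $(1-\beta)i$, so $(\beta-1)i\in\sigma(\overline{T})$, contradicting $(\beta-1)i\in\rho(\overline{T})$ for $\beta>1$. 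You instead observe that \emph{any} eigenvector of $T^*$ automatically lies in $D^\infty(T^*)\subset D(\overline{T})$, hence is an eigenvector of the symmetric operator $\overline{T}$, which forces $\sigma_p(T^*)\subset{\mb R}$; since a non-self-adjoint maximal symmetric $\overline{T}$ has a whole nonreal open half-plane of $T^*$-eigenvalues, the contradiction is immediate. This bypasses Lemma \ref{lemma_4} entirely and uses the semiboundedness of $H$ only through Proposition \ref{prop_2.3}(ii); the paper's longer route keeps the eigenvalue-shifting mechanism of $e^{-\beta H}$ (in the spirit of Arai) explicit, but for this particular statement your shortcut suffices. One small bookkeeping slip: with deficiency indices $(n,0)$, $n>0$, the identity $\ker(T^*-\lambda)=R(\overline{T}-\overline{\lambda})^{\perp}$ produces nontrivial kernels for $\lambda$ in the half-plane \emph{conjugate} to $\sigma(\overline{T})\setminus{\mb R}$ (i.e.\ the half-plane contained in $\rho(\overline{T})$), not in the ``opposite'' one as you wrote; since either way $\sigma_p(T^*)$ contains a full nonreal open half-plane, the argument is unaffected.
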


\begin{proof} By \cite[Theorem X.1]{reedsimon} and (ii) of Proposition \ref{prop_2.3}, $\sigma(T)\, (=\sigma(\overline{T})$ is one of the following sets

(a) ${\mb C}$;

(b) $\overline{\Pi}_+$, the closure of the upper half-plane $\Pi_+=\{z \in {\mb C}; \Im z >0\};$

(c) $\overline{\Pi}_-$, the closure of the lower half-plane $\Pi_-=\{z \in {\mb C}; \Im z <0\}.$

Suppose that $\sigma(T)=\overline{\Pi}_-$. For every $z \in {\mb C}\setminus {\mb R}$, we have
$$ \H= {\rm Ker}(T^*-\overline{z}) \oplus R(\overline{T}-z).$$
Since $i\in \Pi_+ =\rho(\overline{T})$, we get ${\rm Ker}(T^*+i)=\{0\}$. But $\overline{T}$ is not self-adjoint (Proposition \ref{prop_2.3}),
hence  ${\rm Ker}(T^*-i)\neq\{0\}$; i.e. there exists a nonzero $\eta \in D(T^*)$ such that $T^*\eta=i\eta$. Then, $\eta \in
\D^\infty(T^*)\subset D(\overline{T})$, by \eqref{domain}. By Lemma \ref{lemma_4},
$$ T^*e^{-\beta H} \eta = e^{-\beta H}(\overline{T}-\beta i)\eta = (1-\beta)i e^{-\beta H}\eta.$$
This implies that $\gamma:= (1-\beta)i \in \sigma_p(T^*)$. Since $\H= {\rm Ker}(T^*-\gamma) \oplus R(\overline{T}-\overline{\gamma}),$ we
obtain that $R(\overline{T}-\overline{\gamma})\neq \H$. Thus $ \overline{\gamma} =(\beta-1)i \in \sigma(\overline{T})=\sigma(T)$. Now, if we
take $\beta>1$, by the assumption, $(\beta-1)i\in \Pi_+=\rho(\overline{T}) $. This is a contradiction. Therefore $\sigma(T)\neq
\overline{\Pi}_-$. In very similar way one can prove that $\sigma(T)\neq \overline{\Pi}_+$. In conclusion, $\sigma(T)={\mb C}$.

\end{proof}
\berem When does the inclusion $D^\infty(T^*)\subset D(\overline{T})$ hold? Let us consider the partial O*-algebra $\MM_w(T)$ generated by $T$
described in \cite[Section 2.6]{ait_book}. If $\MM_w(T)$ is essentially self-adjoint, then $D^\infty(T^*)\subset D(\overline{T})$. If $T\in
\LD$, then $\MM_w(T)$ is nothing but the O*-algebra $\P(T)$ of all polynomials in $T$; in this case if $\MM_w(T)$ is essentially self-adjoint,
then $T$ is essentially self-adjoint. But we know \cite[Example 2.6.28]{ait_book} that, in the case of partial O*-algebras, the essential
self-adjointness of $\MM_w(T)$ does not imply the essential self-adjointness of $T$. \enrem

\section{A nonlinear extension}

In this section we will consider a generalization of condition (CR2) introduced in Section 1, to what could be called, borrowing a physical
terminology adopted first in \cite{nlpb}, a {\em nonlinear } situation. We start  considering two biorthogonal (Schauder) bases of the Hilbert
space $\H$, both contained in $\D$, $\F_\varphi=\{\varphi_n\in\D, \,n\geq0\}$ and $\F_\psi=\{\psi_n\in\D, \,n\geq0\}$. Therefore, in
particular, the sets $\D_\varphi$ and $\D_\psi$ of their finite linear combinations are dense in $\H$, and moreover $
\ip{\varphi_i}{\psi_j}=\delta_{i,j}$. We also consider a strictly increasing sequence of non negative numbers:
$0=\epsilon_0<\epsilon_1<\epsilon_2<\cdots$.  On $\D_\varphi$ and $\D_\psi$  we can introduce two operators, $a$ and $b^\dagger$:
$$
\D_\varphi\ni f=\sum_{k=0}^M\,c_k\varphi_k \rightarrow a\,f=\sum_{k=1}^M\,c_k\,\sqrt{\epsilon_{k}}\varphi_{k-1},
$$
and
$$
\D_\psi\ni h=\sum_{k=0}^{M'}\,d_k\psi_k \rightarrow b^\dagger\,h=\sum_{k=1}^{M'}\,d_k\,\sqrt{\epsilon_{k}}\psi_{k-1}.
$$
It is possible to check that $\D_\psi\subseteq D(a^\dagger)$ and $\D_\varphi\subseteq D(b)$ and that, in particular
$$
a^\dagger\psi_k=\sqrt{\epsilon_{k+1}}\,\psi_{k+1},\qquad b\varphi_k=\sqrt{\epsilon_{k+1}}\,\varphi_{k+1},
$$
so that these operators act as {\em generalized rising operators} on two different bases. Analogously, $a$ and $b^\dagger$ act as {\em
generalized lowering operators}, as we can deduce from the formulas above, which give, in particular,
$$
a\varphi_k=\sqrt{\epsilon_{k}}\,\varphi_{k-1},\qquad b^\dagger\psi_k=\sqrt{\epsilon_{k}}\,\psi_{k-1},
$$
if $k\geq1$, or zero if $k=0$. Taking now $f\in \D_\varphi$ and $h\in\D_\psi$ as above, we conclude that \be
\ip{b\,f}{a^\dagger\,h}-\ip{a\,f}{b^\dagger\,h}=\sum_{l=0}^{\min(M,M')}\left(\epsilon_{l+1}-\epsilon_l\right)\,{c_l}\,\overline{d_l}.
\label{f1}\en Let us now introduce an operator $X$ satisfying $X\psi_k=\left(\epsilon_{k+1}-\epsilon_k\right)\psi_k$. This can be formally
written as
\begin{equation}\label{eqn_defnX}
X=\sum_{l=0}^{\infty}\left(\epsilon_{l+1}-\epsilon_l\right)\,\psi_l \otimes \overline{ \varphi_l},
\end{equation}
where $ (\psi_l \otimes \overline{ \varphi_l}) \xi= \ip{\xi}{\varphi_l}\psi_l$. Hence formula (\ref{f1}) can be re-written as
\begin{equation}\label{eqn_formalccr}
\ip{b\,f}{a^\dagger\,h}-\ip{a\,f}{b^\dagger\,h}=\ip{f}{X\,h}.
\end{equation}
Incidentally we observe that in the {\em linear regime}, i.e. when $\epsilon_l=l$, we recover (CR2).

{ In order to make meaningful the above formula \eqref{eqn_formalccr} and  proceed with our analysis, we need a better knowledge of operators
of the form
\begin{equation} \label{X_form}X=\sum_{k=0}^\infty \alpha_k (\psi_k\otimes \overline{ \varphi_k})\end{equation}
with $\{\varphi_n\}$  and $\{\psi_n\}$ two biorthogonal bases and $\alpha_k\geq0$, as above.

To simplify notations, we put $R_k=\psi_k\otimes \overline{ \varphi_k}$. This family of rank one operators enjoys the following easy
properties:
\begin{itemize}
\item[(i)] $\|R_k\|\leq \|\varphi_k\|\, \|\psi_k\|$;
\item[(ii)] $R_k^*= \varphi_k \otimes \overline{\psi_k}$;
\item[(iii)] $R_k^2 =R_k$ and $R_k R_m=0$ if $m\neq k$;
\end{itemize}
In particular, (iii) implies that $R_k$ is a {nonselfadjoint projection} (unless $\varphi_k=\psi_k$). Moreover
$$ \xi= \sum_{k=0}^\infty R_k \xi, \quad \forall \xi \in \D.$$

We notice that the series in the right-hand side converges because of the properties of biorthogonal bases, see \cite{chri,you}. This equality
implies that $\{R_k\}$ is a resolution of the identity.}

{\begin{lemma} \label{thelemma} Let $\{\varphi_n\},\, \{\psi_n\}$ be two biorthogonal bases in $\D$ and let
\begin{equation}\label{eqn_defnX2} X=\sum_{k=0}^\infty \alpha_k ( \psi_k\otimes \overline{ \varphi_k}),\end{equation} with $\{\alpha_n\}$ a sequence of positive real numbers.
Then the following statements hold.
\begin{enumerate}
\item $D(X)\supset \D_\psi$ and $X\psi_k =\alpha_k \psi_k$, for every $k\in {\mb N}$.
\item $\D \subset D(X)$ if, and only if, for every $\xi \in \D$
$$ \lim_{n\to \infty} \left\| \left( \sum_{k=n+1}^{n+p} \alpha_k \psi_k \otimes \overline{ \varphi_k}\right)\xi\right\| =
\lim_{n\to \infty} \left\|  \sum_{k=n+1}^{n+p} \alpha_k\ip{\xi}{\varphi_k}\psi_k \right\|=0,$$
for every $p\in {\mb N}_0$.
\item $\D_\varphi \subset D(X)$ if, and only if, for every $l \in {\mb N}$
$$ \lim_{n\rightarrow\infty}\left\|\sum_{k=n+1}^{n+p}\,\alpha_k\ip{\varphi_l}{\varphi_k}\psi_k\right\|=0,$$
for every $p\in {\mb N}_0$.

\item If $\D \subset D(X)$, $X$ has an adjoint $X^*$ and
 $$X^*=\sum_{n=0}^\infty\alpha_n\,(\varphi_n\otimes\overline{\psi_n}).$$

\end{enumerate}
\end{lemma}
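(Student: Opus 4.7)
My approach will be to dispatch (1)--(3) directly from the definition of convergence of the formal series for $X$, and to reserve the real work for (4), where the Schauder basis structure comes into play.

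For (1), a one-line computation using $(\psi_l\otimes\overline{\varphi_l})\psi_k = \ip{\psi_k}{\varphi_l}\psi_l = \delta_{lk}\psi_l$ collapses the series to a single term, yielding $X\psi_k=\alpha_k\psi_k$; linearity then gives $\D_\psi\subset D(X)$. For (2) and (3), I observe that $\xi\in D(X)$ iff the partial sums $\sum_{k=0}^n \alpha_k\ip{\xi}{\varphi_k}\psi_k$ are Cauchy in $\H$. The Cauchy criterion applied pointwise on $\D$ is exactly the condition in (2); specializing to $\xi=\varphi_l$ and appealing to linearity to handle arbitrary finite combinations yields (3).

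The real content lies in (4). Since $\D_\psi\subset D(X)$ and $\{\psi_n\}$ is a basis, $D(X)$ is dense in $\H$, so $X^*$ exists. To identify it, I use the Schauder basis property of $\{\varphi_n\}$ with biorthogonal sequence $\{\psi_n\}$: every $u\in\H$ admits the unique expansion $u=\sum_n\ip{u}{\psi_n}\varphi_n$. Applying this to $u=X^*\eta$ for $\eta\in D(X^*)$, and evaluating the coefficients via
$$\ip{X^*\eta}{\psi_n} \;=\; \overline{\ip{\psi_n}{X^*\eta}} \;=\; \overline{\ip{X\psi_n}{\eta}} \;=\; \alpha_n\ip{\eta}{\psi_n},$$
(using that $\psi_n\in D(X)$ with $X\psi_n=\alpha_n\psi_n$ from (1) and that $\alpha_n$ is real), I obtain $X^*\eta = \sum_n\alpha_n\ip{\eta}{\psi_n}\varphi_n$, which is exactly $\sum_n\alpha_n(\varphi_n\otimes\overline{\psi_n})\eta$. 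Conversely, for $\eta$ in the natural domain of $Y:=\sum_n\alpha_n(\varphi_n\otimes\overline{\psi_n})$, a swap of summation and inner product (justified by the norm convergence of the partial sums of $X\xi$) gives $\ip{X\xi}{\eta}=\ip{\xi}{Y\eta}$ for every $\xi\in D(X)$, placing $\eta$ in $D(X^*)$ with $X^*\eta=Y\eta$.

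The step I expect to be most delicate is this identification of the full operator, not merely a one-sided inclusion. The direction $Y\subset X^*$ is a routine exchange of sum and inner product; the reverse inclusion $X^*\subset Y$ is less automatic and hinges crucially on the \emph{uniqueness} of biorthogonal expansions relative to the Schauder basis $\{\varphi_n\}$, which is what forces the expansion of $X^*\eta$ read off from the action on $\{\psi_n\}$ to coincide, term by term, with the series defining $Y\eta$.
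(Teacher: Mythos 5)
Your proof is correct and follows essentially the same route as the paper's: parts (1)--(3) are dispatched as immediate consequences of the definition and the Cauchy criterion, and for (4) the inclusion $Y\subset X^*$ is checked directly while the reverse inclusion is obtained by testing $X^*\eta$ against the vectors $\psi_n$ (the paper uses $R_n\xi$, which is just a multiple of $\psi_n$) and invoking the expansion $\zeta=\sum_n\ip{\zeta}{\psi_n}\varphi_n$ coming from the biorthogonal Schauder bases. The only differences are cosmetic, e.g.\ phrasing the conclusion via uniqueness of the basis expansion rather than the paper's ``limiting procedure'' on finite sums.
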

\begin{proof} {(1) is obvious and (2), (3) are nothing but the Cauchy convergence conditions, that we have written explicitly to put in evidence the difference with the case of a single orthonormal basis.
Thus, we prove only (4)}.
First, it is easy to check that
$\sum_{k=0}^\infty\alpha_k\,(\varphi_k\otimes\overline{\psi_k})\subset X^*$.

Conversely, let $\eta$ be an arbitrary element of $D(X^*)$. Then there exists $\zeta\in\H$ such that $$\ip{X\xi}{\eta}=\ip{\xi}{\zeta}, \quad
\forall \xi\in D(X).$$ Since $R_n\xi \in D(X)$ and $XR_n\xi=\alpha_n\,R_n\xi$, for every $\xi \in D(X)$ and $n\in \Bbb N$, we have

{ $$\ip{XR_n\xi}{\eta}= \ip{R_n\xi}{\zeta} =\ip{(\psi_n\otimes\overline{\varphi_n})\xi }{\zeta}=
\ip{\xi}{(\varphi_n\otimes\overline{\psi_n})\zeta}$$}

On the other hand, {$$ \ip{XR_n\xi}{\eta}= \alpha_n\ip{(\psi_n\otimes\overline{\varphi_n})\xi}{\eta}=
\alpha_n\ip{\xi}{(\varphi_n\otimes\overline{\Psi_n})\eta}.
$$
Hence, $(\varphi_n\otimes\overline{\psi_n})\zeta=\alpha_n(\varphi_n\otimes\overline{\psi_n})\eta$, so that
\begin{equation}\label{eqn_finitesums}\sum_{n=0}^N \alpha_n(\varphi_n\otimes\overline{\psi_n})\eta = \sum_{n=0}^N (\varphi_n\otimes\overline{\psi_n})\zeta. \end{equation}
Moreover, since { $\sum_{n=0}^\infty (\varphi_n \otimes \overline{\psi_n})\zeta= \sum_{n=0}^\infty R_n^*\zeta =\zeta$}, by a limiting procedure in \eqref{eqn_finitesums}, it follows that $\eta$ belongs to the domain of the operator $ \sum_{n=0}^\infty\alpha_n (\varphi_n \otimes \overline{\psi_n})$ and
$\sum_{n=0}^\infty\alpha_n (\varphi_n \otimes \overline{\psi_n})\eta=\zeta.$}

\end{proof}

}
\medskip
Of course, these conditions are clearly satisfied when $\F_\varphi$ and $\F_\psi$ collapse into a single orthonormal set.

\medskip

We are now ready to introduce the following definition:

\bedefi\label{def1}  Let $S, T \in \LDH$ and $\{\varphi_n\}$  and $\{\psi_n\}$ two biorthogonal bases of $\H$, contained in $\D$. We say that
$S$ and $T$ satisfy the nonlinear CR.2 if, for all $\xi$ and $\eta$ in $\D$, \be
\ip{T\,\xi}{S^\dagger\,\eta}-\ip{S\,\xi}{T^\dagger\,\eta}=\ip{\xi}{X\,\eta}, \label{31}\en {where $X$ is an operator of the form
\eqref{eqn_defnX2} with $\D\subset D(X)$}. \findefi

\berem \label{rem_ONbasis}Let $\{\chi_n\}$ be an orthonormal basis in $\D$ and $G$ a symmetric bounded operator with bounded inverse $G^{-1}$.
Suppose that $G\D=\D$. Then, if we put $\vp_k :=G\chi_k$  and $\psi_k:= G^{-1}\chi_k$, we obtain two biorthogonal bases of $\H$, still
belonging to $\D$. Under these assumptions, we get
$$(\psi_k \otimes \overline{\varphi_k})\xi = \ip{\xi}{\varphi_k}\psi_k =  \ip{\xi}{G\chi_k}G^{-1}\chi_k = \ip{G\xi}{\chi_k}G^{-1}\chi_k.$$
Hence $\psi_k \otimes \overline{\varphi_k}= G^{-1}(\chi_k \otimes \overline{\chi_k})G$. Thus if $S,T$ satisfy the non linear CR.2 with $X$ as
in \eqref{X_form}, the operators $K:=G^{-1}TG$ and $H:=G^{-1}SG$ satisfy
$$\ip{K\,\xi}{H^\dagger\,\eta}-\ip{H\,\xi}{K^\dagger\,\eta}=\ip{\xi}{Y\,\eta} , \quad \forall \xi, \eta \in \D,
$$
where \begin{equation} \label{eqn_Y} Y= \sum_{k=0}^\infty \alpha_k (\chi_k\otimes \overline{ \chi_k}).\end{equation} {  Therefore,
$X=G^{-1}YG$, and $D(X)\supset \D$ if, and only if,  $$\sum_{k=0}^\infty\alpha_k^2|\ip{G\xi}{\chi_k}|^2<\infty, \quad\forall \xi\in\D.$$} It is natural to consider, as a first step, this simpler situation. \enrem

The operator $Y$, defined in \eqref{eqn_Y} is bounded if, and only if, $\{\alpha_k\}\in l^\infty(\Bbb N)$. Indeed, suppose first that $Y$ is
bounded. { Hence, for some $M>0$, $\|Yf\|\leq M\|f\|$, for every $f\in\H$.} Then, for all $k\in {\mb N}$, ${\alpha_k}\|\chi_k\|=\|Y\chi_k\|\leq
M\|\chi_k\|$. Therefore $|\alpha_k|\leq M$, for all $k$. Viceversa, let us assume that ${0\leq\alpha_k}\leq M$, for all $k\in {\mb N}$. Then, using the
orthogonality of the $\chi_k$'s and the Parceval equality,
$$
\|Yf\|^2=\sum_{k=0}^\infty {\alpha_k^2}|\ip{\chi_k}{f}|^2\leq M^2\|f\|^2.
$$
{So $Y$ is bounded and $\|Y\|\leq M$}.

The spectrum $\sigma(Y)$ is also easily determined: in fact, { $\sigma(Y)=\sigma_p(Y)=\overline{\{\alpha_k; k \in {\mb N}\}}$}, where $\sigma_p(Y)$ denotes,
as usual, the point spectrum of $Y$. We remark that in finite dimensional spaces every family of projections whose sum is the identity operator
is similar to a family of orthogonal projections; so that the situation discussed above is the more general possible. For the infinite
dimensional case, an analogous statement was obtained by Mackey \cite[Theorem 55]{mackey}: every nonselfadjoint resolution of the identity
(i.e. a spectral measure on the Borel set of the plane or of the real line) is similar to a selfadjoint resolution of the identity.

{ The extension to $X$ of the results outlined in Remark \ref{rem_ONbasis} is, under suitable assumptions, quite straightforward.

\medskip

We have

\begin{prop} Let  $\F_\psi=\{\psi_k\}$ and $\F_\varphi=\{\varphi_k\}$ be  biorthogonal Riesz bases for $\H$ and let $$X=\sum_{k=0}^\infty \alpha_k (\psi_k\otimes \overline{ \varphi_k}), \quad \alpha_k \in {\mb R}^+.$$ Then the following statements hold.
\begin{itemize}
\item[(i)] $X$ is bounded if and only if $\{\alpha_k\}\in l^\infty(\Bbb N)$.  
\item[(ii)] For every $k\in {\mb N}$, $\psi_k\in D(X)$, $\varphi_k\in D(X^*)$, and
$$
X\psi_k=\alpha_k\psi_k,\qquad X^*\varphi_k=\alpha_k\,\varphi_k.
$$
\item[(iii)] $\sigma(X)=\overline{\{\alpha_k; k \in {\mb N}\}}.$
\end{itemize}

\end{prop}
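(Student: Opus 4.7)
The plan is to reduce the three statements to the corresponding facts for the operator $Y=\sum_k \alpha_k(\chi_k \otimes \overline{\chi_k})$ discussed in the paragraphs preceding the proposition, via a similarity induced by the structure theorem for Riesz bases.

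First I would invoke the standard characterization of Riesz bases: there exist an orthonormal basis $\{\chi_k\}$ of $\H$ and a bounded operator $T$ on $\H$ with bounded inverse such that $\psi_k = T\chi_k$ for every $k$. The biorthogonality $\ip{\varphi_j}{\psi_k}=\delta_{j,k}$ then forces $\varphi_k=(T^*)^{-1}\chi_k$. A direct computation gives
$$\psi_k \otimes \overline{\varphi_k}\, \xi = \ip{\xi}{\varphi_k}\psi_k = \ip{T^{-1}\xi}{\chi_k}T\chi_k = T(\chi_k \otimes \overline{\chi_k})T^{-1}\xi,$$
and, after checking the domain equality $D(X)=T\,D(Y)$, this yields the operator identity $X = TYT^{-1}$.

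From this identity the three statements follow quickly. For (i), $X$ is bounded if and only if $Y$ is bounded, and the discussion preceding the proposition characterizes the latter as $\{\alpha_k\}\in l^\infty(\Bbb N)$. For (ii), I compute $X\psi_k = TYT^{-1}\psi_k = TY\chi_k = \alpha_k T\chi_k = \alpha_k\psi_k$; taking adjoints (and noting that $Y^*=Y$ since the $\alpha_k$ are real and $\{\chi_k\}$ is orthonormal) yields $X^* = (T^*)^{-1}Y\,T^*$, whence $X^*\varphi_k = (T^*)^{-1}Y\chi_k = \alpha_k\varphi_k$. For (iii), similarity by a bounded invertible operator preserves the spectrum, so $\sigma(X)=\sigma(Y)=\overline{\{\alpha_k;\,k\in{\mb N}\}}$.

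The main technical point will be to justify the identity $X = TYT^{-1}$ at the unbounded level: one must verify that $D(X)$, as determined by the Cauchy convergence conditions of Lemma \ref{thelemma}, actually coincides with $T\,D(Y)$, and that similarity via a bounded invertible operator preserves the whole spectrum (not just the point spectrum) in the unbounded setting. Both verifications are essentially routine, but they have to be checked against the explicit series definition of $X$, using the continuity of $T$ and $T^{-1}$ to transport partial sums and resolvents between the two pictures.
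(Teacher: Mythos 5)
Your proof is correct, and for part (i) it essentially coincides with the paper's: the authors likewise invoke the Riesz-basis structure operator (a bounded, boundedly invertible $S$ with $\varphi_n=Se_n$, $\psi_n=S^{-1}e_n$) to transfer the $\ell^\infty$ characterization from the orthonormal model $Y$. Where you genuinely diverge is in part (iii): the paper never sets up the global similarity $X=TYT^{-1}$; instead, for $\lambda\notin\overline{\{\alpha_k;k\in\NN\}}$ it writes down the candidate resolvent directly as the series $Z=\sum_k(\alpha_k-\lambda)^{-1}R_k$, verifies $(X-\lambda\Id)Z\xi=\sum_kR_k\xi=\xi$ term by term using $R_kR_m=\delta_{km}R_k$, and appeals to part (i) to conclude that $Z$ is bounded because $\{(\alpha_k-\lambda)^{-1}\}\in\ell^\infty$. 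Your $T(Y-\lambda)^{-1}T^{-1}$ is exactly that $Z$ in disguise. What your route buys: two-sided invertibility of $X-\lambda\Id$ (hence injectivity, which the paper's computation of a right inverse alone does not literally address) and closedness of $X$ come for free from $\rho(X)=\rho(Y)$ under bounded similarity, and the adjoint formula needed in (ii) drops out as $(TYT^{-1})^*=(T^*)^{-1}YT^*$. What it costs: you must honestly verify $D(X)=T\,D(Y)$ from the series definition of the domain (the Cauchy conditions of Lemma~\ref{thelemma}); you flag this, and it is indeed routine since $T$ and $T^{-1}$ are bounded and carry Cauchy partial sums to Cauchy partial sums. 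Two small points to make explicit: the unique biorthogonal system of $\psi_k=T\chi_k$ is $\varphi_k=(T^*)^{-1}\chi_k$ only because a Riesz basis is complete, so its biorthogonal sequence is unique; and when you quote $\sigma(Y)=\overline{\{\alpha_k;k\in\NN\}}$ you should note this holds for the (possibly unbounded) self-adjoint diagonal operator, not only in the bounded case treated in the discussion preceding the proposition.
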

\begin{proof}
(i):  By (1) of Lemma \ref{thelemma} it follows immediately that if $X$ is bounded, then  $\{\alpha_k\}$ is bounded too and
$\sup_{k\in {\mb N}}{\alpha_k}\leq \|X\|$. On the other hand, let $\{\alpha_k\}$ be bounded. Let now $S$ be the bounded, self-adjoint, operator, with
bounded inverse, satisfying $\varphi_n=S\,e_n$, $\Psi_n=S^{-1}e_n$, $n=0,1,2,\ldots$, where $\{e_n\}$ is an orthonormal basis of $\H$. Then
$$ \left\|X \sum_k \beta_k\psi_k\right\|\leq \|S\|\|S^{-1}\|\sup_{k\in {\mb N}}{\alpha_k}\left\|\sum_k \beta_k\psi_k\right\|.$$
The density of $\D_\psi$ in $\H$ implies the statement.

(ii): The proof of this statement easily follows from the definition of $X$ and from the biorthogonality of $\F_\psi$ and $\F_\varphi$. \\
(iii): By (ii), the numbers $\{\alpha_k\}$ are eigenvalues of $X$. The spectrum $\sigma(X)$ consists of the closure $\overline{\{\alpha_k; k
\in {\mb N}\}}$ of { the set of} these eigenvalues. Indeed, {let $\lambda \in {\mb C}\setminus\overline{\{\alpha_k; k\in {\mb N}\}}$ and define} $$Z= \sum_{k=0}^\infty
\frac{1}{\alpha_k-\lambda}R_k.$$ Then,
\begin{eqnarray*} (X-\lambda\Id)Z\xi &=& (X-\lambda\Id)\sum_{k=0}^\infty \frac{1}{\alpha_k-\lambda}R_k\xi\\
&=&\sum_{m=0}^\infty (\alpha_m-\lambda)\sum_{k=0}^\infty \frac{1}{\alpha_k-\lambda}R_mR_k\xi\\
&=& \sum_{k=0}^\infty R_k\xi =\xi.
\end{eqnarray*}
{Since $\lambda$ is not a limit point of the set $\{\alpha_k; k \in {\mb N}\}$, the sequence $\{(\alpha_k-\lambda)^{-1}\}$ is bounded and thus, by (i) $Z$ is bounded and has a continuous extension to $\H$. Hence, $X-\lambda\Id$ has a bounded inverse. On the other hand, every $\lambda \in
\overline{\{\alpha_k; k \in {\mb N}\}} \setminus \{\alpha_k; k \in {\mb N}\}$, if any, belongs to the continuous spectrum $\sigma_c (X)$ of $X$.}
\end{proof}

A slightly weaker result can be obtained if we require, in $X$, that the two sets $\F_\psi$ and $\F_\varphi$ are biorthogonal but not
necessarily complete in $\H$. With a similar argument as before we conclude that, if $X$ is bounded, then $\{\alpha_k\}\in l^\infty(\Bbb N)$, {
but the vice-versa does not hold, in general}. The {\em Riesz-like} nature of the sets is not important, here.

\subsection{Consequences of Definition \ref{def1}}

{ {\em Raising} and {\em lowering} operators play a crucial role in connections with bosons and their generalizations. Hence it is natural to consider this aspect in connection with Definition \ref{def1}.

Let $S,T \in \LDH$ satisfy \eqref{31} and assume that, for some $k\in {\mb N}$,  $T\varphi_k, \, S\varphi_k\in\D$, then,
 $$\ip{(ST\varphi_k-TS\varphi_k)}{\eta}=\alpha_k\ip{\varphi_k}{\eta}, \quad \forall\eta\in\D,$$ so that the
$$
ST\varphi_k- TS \varphi_k=\alpha_k\varphi_k.
$$
}

Analogously, if $T\ad\psi_k, \,S\ad\psi_k\in \D$, then $T\ad S\ad \psi_k - S\ad
T\ad \psi_k =\alpha_k\psi_k$.

\medskip
A first simple {\em raising and lowering} property can be stated as follows: suppose
$S\varphi_0=0$. Then  $T\varphi_0\neq0$ and $T\varphi_0$ is an eigenvector of the (formal) operator $N_l:=TS\das$ with eigenvalue $\alpha_0$.
Moreover, if $T\varphi_0\in\D$ , then $\varphi_0$ is eigenvector of the (formal) operator $N_r:=ST$ with the same eigenvalue, $\alpha_0$. For
convenience we also put $N_l^\#:=S\ad T^* $ and $N_r^\#:=T\ad S^*$. Analogously, if $T\ad\psi_0=0$, then  $S\ad\psi_0\neq0$ and $S\ad\psi_0$ is
an eigenvector of {$S\ad T^*$} with eigenvalue $\alpha_0$. Moreover, if $S\ad\psi_0\in\D$, then $\psi_0$ is eigenvector of {$T\ad S^*$}  with
the same eigenvalue, $\alpha_0$.


\begin{prop}\label{prop1}
Suppose that $S\varphi_0=0$ and that all the $\alpha_k$'s are different. Then the following statements are equivalent.
\begin{enumerate}
\item $X^*(T\varphi_n)=\alpha_{n+1}(T\varphi_n)$, for all $n\in {\mb N}$.
\item For every $n \in {\mb N}$, there exists $\gamma_n \in {\mb C}$ such that $T\varphi_n=\gamma_n\,\varphi_{n+1}$.
\item For every $n \in {\mb N}$, there exists ${\beta_{n}} \in {\mb C}$ such that $T\ad\psi_n={\beta_{n}}\,\psi_{n-1}$, where $\beta_{0}:=0$.
\end{enumerate}
In this case $\beta_n=\overline{\gamma_{n-1}}$ for all $n\geq1$.
\end{prop}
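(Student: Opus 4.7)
The plan is to prove the equivalences by the chain $(2)\Rightarrow(1)$, $(1)\Rightarrow(2)$, $(2)\Rightarrow(3)$, $(3)\Rightarrow(2)$, relying on three ingredients: the eigenvalue identities $X\psi_k=\alpha_k\psi_k$ and $X^*\varphi_k=\alpha_k\varphi_k$ furnished by Lemma \ref{thelemma}; the biorthogonal Schauder expansions $\xi=\sum_m\ip{\xi}{\psi_m}\varphi_m$ and $\xi=\sum_m\ip{\xi}{\varphi_m}\psi_m$ valid for every $\xi\in\H$; and the distinctness of $\{\alpha_k\}$.

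The direction $(2)\Rightarrow(1)$ is immediate: $X^*(\gamma_n\varphi_{n+1})=\gamma_n\alpha_{n+1}\varphi_{n+1}=\alpha_{n+1}T\varphi_n$. For the core implication $(1)\Rightarrow(2)$, the hypothesis gives $T\varphi_n\in D(X^*)$ with $X^*(T\varphi_n)=\alpha_{n+1}T\varphi_n$; pairing against any $\psi_m\in\D_\psi\subset D(X)$ and applying the adjoint relation together with $X\psi_m=\alpha_m\psi_m$ produces $(\alpha_m-\alpha_{n+1})\ip{T\varphi_n}{\psi_m}=0$ for every $m$. Distinctness of the $\alpha_k$'s forces $\ip{T\varphi_n}{\psi_m}=0$ whenever $m\neq n+1$, so the Schauder expansion of $T\varphi_n$ in the basis $\{\varphi_m\}$ collapses to the single term $\gamma_n\varphi_{n+1}$ with $\gamma_n:=\ip{T\varphi_n}{\psi_{n+1}}$.

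For the equivalence $(2)\Leftrightarrow(3)$, assuming (2) I expand $T\ad\psi_n=\sum_m\ip{T\ad\psi_n}{\varphi_m}\psi_m$ and compute each coefficient via the adjoint relation and (2): $\ip{T\ad\psi_n}{\varphi_m}=\overline{\ip{T\varphi_m}{\psi_n}}=\overline{\gamma_m}\,\delta_{m+1,n}$, which is nonzero only for $m=n-1$. This yields $T\ad\psi_n=\overline{\gamma_{n-1}}\,\psi_{n-1}$ for $n\geq 1$ and $T\ad\psi_0=0$, i.e. (3) with $\beta_n=\overline{\gamma_{n-1}}$ (and $\beta_0=0$). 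The converse $(3)\Rightarrow(2)$ is the mirror image, using the $\varphi$-expansion of $T\varphi_n$ and the same computation with the roles of the two bases interchanged.

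The only substantive obstacle is the one-dimensionality of the $\alpha_{n+1}$-eigenspace of $X^*$ underlying $(1)\Rightarrow(2)$, which is secured cleanly by the distinctness hypothesis combined with the biorthogonal decomposition; the rest is bookkeeping with the pair $(\F_\varphi,\F_\psi)$. The assumption $S\varphi_0=0$ does not actually enter the algebraic equivalence itself: it is the physical setup that identifies $\varphi_0$ as a vacuum for $S$ and is what makes the ladder convention $\beta_0=0$ natural.
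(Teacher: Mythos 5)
Your proposal is correct and follows essentially the paper's own route: the key step $(1)\Rightarrow(2)$ is the identical pairing against $\psi_m$ to get $(\alpha_{n+1}-\alpha_m)\ip{T\varphi_n}{\psi_m}=0$ and then invoking the biorthogonal expansion, while $(2)\Leftrightarrow(3)$ is the same adjoint/biorthogonality computation and the closing of the cycle uses $X^*\varphi_{n+1}=\alpha_{n+1}\varphi_{n+1}$ from Lemma \ref{thelemma}(4), exactly as in the paper's $(3)\Rightarrow(1)$. Your closing observation that $S\varphi_0=0$ plays no role in the equivalence itself (and that $T\ad\psi_0=0$ already follows from (2)) is accurate and in fact slightly cleaner than the paper, which invokes ``the assumption $T\ad\psi_0=0$'' at the $l=0$ step.
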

\begin{proof}
{We may suppose that the $\gamma_n$'s are not zero, since otherwise  (1), (2) and (3)
are trivially equivalent. }

(1)$\Rightarrow$(2): By taking the scalar product of both sides of the equality in (1) with a generic $\psi_l$, we obtain
$$
(\alpha_{n+1}-\alpha_l)\ip{T\varphi_n}{\psi_l}=0,
$$
for all possible $l$. Since the $\alpha_k$'s are different, if $l\neq n+1$, the vector $T\varphi_n$ must be orthogonal to $\psi_l$, $\forall
l\in {\mb N}\setminus\{n+1\}$. Hence, due to the uniqueness of the biorthogonal basis \cite{chri}, $T\varphi_n$ is necessarily proportional to
$\varphi_{n+1}$. Then (2) follows.

(2)$\Rightarrow$(3): { Using (2) and the biorthogonality condition $\ip{\varphi_n}{\psi_m}=\delta_{n,m}$, we get
$\ip{T\varphi_n}{\psi_l}=\gamma_{l-1}\delta_{n,l-1}$. On the other hand, $\ip{T\varphi_n}{\psi_l}=\ip{\varphi_n}{T\ad\psi_l}$. Thus, for $l\geq1$,
$$\ip{\varphi_n}{\left(T\ad\psi_l-\overline{\gamma_{l-1}}\psi_{l-1}\right)}=0, \quad \forall n \in {\mb N}.$$} Then (3) follows from the
completeness of $\F_\varphi$ with ${\beta_{n}}=\overline{\gamma_{n-1}}$. Notice also that, if $l=0$, (3) is trivially true because of the
assumption $T\ad\psi_0=0$ and since $\beta_0=0$.

(3)$\Rightarrow$(1): By (3) we get easily the equality $T\varphi_n=\overline{\beta_{n+1}}\varphi_{n+1}$. Therefore, { by (4) of Lemma \ref{thelemma}}, $T\varphi_n\in D(X^*)$ and

$$X^*(T\varphi_n)=\left(\sum_{k=0}^\infty\alpha_k\,(\varphi_k\otimes\overline{\psi_k})\right)\overline{\beta_{n+1}}\varphi_{n+1}=\alpha_{n+1}(T\varphi_n).$$

\end{proof}

Analogous results can be proved for the operator $S$. Indeed, we have

\begin{prop}\label{prop2}
Suppose that $T\ad\psi_0=0$ and that all the $\alpha_k$'s are different. Then the following statements are all equivalent.
\begin{enumerate}
\item $X(S\ad\psi_n)=\alpha_{n+1}(S\ad\psi_n)$ for all $n\in {\mb N}$.
\item For every $n \in {\mb N}$, there exists $\tilde\gamma_n \in {\mb C}$ such that $S\ad\psi_n=\tilde\gamma_n\,\psi_{n+1}$.
\item For every $n \in {\mb N}$, there exists $\tilde{\beta_{n}} \in {\mb C}$ such that  $S\varphi_n=\tilde{\beta_{n}}\,\varphi_{n-1}$, where $\tilde\beta_0:=0$.
\end{enumerate}
In this case $\tilde{\beta_{n}}=\overline{\tilde\gamma_{n-1}}$, for every $n \geq1.$
\end{prop}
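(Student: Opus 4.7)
The plan is to mirror the proof of Proposition \ref{prop1}, swapping the roles of the two biorthogonal bases and of $(T,T^\dagger)$ with $(S^\dagger,S)$, and exploiting the fact that the $\psi_k$'s are eigenvectors of $X$ while the $\varphi_k$'s are eigenvectors of $X^*$ (both with eigenvalues $\alpha_k$), which follows from Lemma \ref{thelemma}(1) and its mirror for $X^*$ via Lemma \ref{thelemma}(4). As in the previous proof, I will assume the $\tilde\gamma_n$'s are non-zero, the remaining case being trivial.

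For (1)$\Rightarrow$(2) I would pair the eigenvalue equation $X(S^\dagger\psi_n)=\alpha_{n+1}(S^\dagger\psi_n)$ with $\varphi_l$, moving $X$ across the inner product as $X^*$ to obtain
$$ \alpha_l\,\ip{S^\dagger\psi_n}{\varphi_l} = \ip{S^\dagger\psi_n}{X^*\varphi_l} = \alpha_{n+1}\ip{S^\dagger\psi_n}{\varphi_l}.$$
Since the $\alpha_k$'s are distinct this forces $\ip{S^\dagger\psi_n}{\varphi_l}=0$ for every $l\neq n+1$. Expanding $S^\dagger\psi_n$ in the Schauder basis $\{\psi_l\}$ — whose coefficients are precisely the pairings with the dual basis $\{\varphi_l\}$ — kills every term but the one with index $n+1$, yielding $S^\dagger\psi_n=\tilde\gamma_n\,\psi_{n+1}$.

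For (2)$\Rightarrow$(3) I would compute, for $n\geq 1$,
$$ \ip{S\varphi_n}{\psi_l} = \ip{\varphi_n}{S^\dagger\psi_l} = \overline{\tilde\gamma_l}\,\ip{\varphi_n}{\psi_{l+1}} = \overline{\tilde\gamma_l}\,\delta_{n,l+1},$$
the conjugate arising from the antilinearity in the second slot. Only $l=n-1$ contributes, and expanding $S\varphi_n$ in $\{\varphi_l\}$ via pairings with $\{\psi_l\}$ produces $S\varphi_n=\overline{\tilde\gamma_{n-1}}\,\varphi_{n-1}$, which is exactly (3) with $\tilde\beta_n=\overline{\tilde\gamma_{n-1}}$. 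The boundary case $n=0$ needs a separate sanity check: the same computation gives $\ip{S\varphi_0}{\psi_l}=\overline{\tilde\gamma_l}\,\delta_{0,l+1}=0$ for every $l\in\mathbb N$, and completeness of $\{\psi_l\}$ in $\H$ then forces $S\varphi_0=0$, consistent with the convention $\tilde\beta_0=0$.

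For (3)$\Rightarrow$(1) I would run the symmetric calculation $\ip{\varphi_n}{S^\dagger\psi_m}=\ip{S\varphi_n}{\psi_m}=\tilde\beta_n\,\delta_{n-1,m}$, which on expansion yields $S^\dagger\psi_m=\overline{\tilde\beta_{m+1}}\,\psi_{m+1}$; this recovers $\tilde\gamma_m=\overline{\tilde\beta_{m+1}}$ and, in particular, $S^\dagger\psi_m\in\D\subset D(X)$. Applying $X$ and invoking Lemma \ref{thelemma}(1) gives $X(S^\dagger\psi_m)=\tilde\gamma_m\,\alpha_{m+1}\psi_{m+1}=\alpha_{m+1}(S^\dagger\psi_m)$, closing the cycle. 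The only real obstacle is bookkeeping: keeping track of which basis sits on which side of each inner product and of the complex conjugates introduced by antilinearity, so that the relation $\tilde\beta_n=\overline{\tilde\gamma_{n-1}}$ comes out with the correct conjugation.
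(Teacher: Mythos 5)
Your proof is correct and follows exactly the route the paper intends: the paper gives no separate argument for this proposition, stating only that it is "analogous" to Proposition \ref{prop1}, and your mirrored version (swapping the bases, replacing $(T,T^\dagger,X^*)$ by $(S^\dagger,S,X)$, and using $X\psi_k=\alpha_k\psi_k$, $X^*\varphi_k=\alpha_k\varphi_k$) is precisely that analogue, with the conjugations and the $n=0$ boundary case handled correctly.
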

These two propositions, together, have interesting consequences:

\begin{cor}\label{cor47}
Suppose that $S\varphi_0=0$, $T\ad\psi_0=0$,  and that all the $\alpha_k$'s are different.
Suppose also that $X^*(T\varphi_n)=\alpha_{n+1}(T\varphi_n)$ and $X(S\ad\psi_n)=\alpha_{n+1}(S\ad\psi_n)$,
for all $n\geq0$. Then , for all $n\geq0$, \\

\begin{enumerate}
\item $N_l\varphi_n=\gamma_{n-1}\overline{\tilde\gamma_{n-1}}\varphi_n$ and $N_r\varphi_n=\gamma_{n}\overline{\tilde\gamma_{n}}\varphi_n$;
\item $N_l^\#\psi_n=\overline{\gamma_{n-1}}\tilde\gamma_{n-1}\psi_n$ and $N_r^\#\psi_n=\overline{\gamma_{n}}\tilde\gamma_{n}\psi_n$;
\item $\alpha_n=\gamma_{n}\overline{\tilde\gamma_{n}}-\gamma_{n-1}\overline{\tilde\gamma_{n-1}}$.
\end{enumerate}
\end{cor}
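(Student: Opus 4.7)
The plan is to feed everything from Propositions \ref{prop1} and \ref{prop2} into direct computations on the biorthogonal bases, and to extract (3) by testing the defining relation \eqref{31} against a diagonal pair $(\varphi_n,\psi_n)$.

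First, I would record the four structural identities produced by the hypotheses. Since $S\varphi_0=0$ and $X^*(T\varphi_n)=\alpha_{n+1}(T\varphi_n)$, Proposition \ref{prop1} gives scalars $\gamma_n$ with
\[ T\varphi_n=\gamma_n\varphi_{n+1},\qquad T\ad\psi_n=\overline{\gamma_{n-1}}\psi_{n-1}\quad(n\ge 0),\]
with the convention $\gamma_{-1}:=0$ coming from $\beta_0=0$. Symmetrically, from $T\ad\psi_0=0$ and $X(S\ad\psi_n)=\alpha_{n+1}(S\ad\psi_n)$, Proposition \ref{prop2} yields $\tilde\gamma_n$ with
\[ S\ad\psi_n=\tilde\gamma_n\psi_{n+1},\qquad S\varphi_n=\overline{\tilde\gamma_{n-1}}\varphi_{n-1}\quad(n\ge 0),\]
and $\tilde\gamma_{-1}:=0$.

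Next, for (1) and (2) I would exploit the fact that on vectors of $\D$ one has $S\subset S\das$ and $T\ad\subset T^*$, so $S\das\varphi_n=S\varphi_n$ and $T^*\psi_n=T\ad\psi_n$. A one-line computation then gives
\[ N_l\varphi_n=T(S\varphi_n)=\overline{\tilde\gamma_{n-1}}\,T\varphi_{n-1}=\gamma_{n-1}\overline{\tilde\gamma_{n-1}}\,\varphi_n,\]
\[ N_r\varphi_n=S(T\varphi_n)=\gamma_n\, S\varphi_{n+1}=\gamma_n\overline{\tilde\gamma_n}\,\varphi_n,\]
and analogously
\[ N_l^{\#}\psi_n=S\ad(T\ad\psi_n)=\overline{\gamma_{n-1}}\tilde\gamma_{n-1}\,\psi_n,\quad N_r^{\#}\psi_n=T\ad(S\ad\psi_n)=\overline{\gamma_n}\tilde\gamma_n\,\psi_n.\]
For $n=0$ the conventions $\gamma_{-1}=\tilde\gamma_{-1}=0$ match the fact that $S\varphi_0=0$ and $T\ad\psi_0=0$, so nothing degenerate occurs.

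Finally, to obtain (3) I would plug $\xi=\varphi_n$ and $\eta=\psi_n$ into \eqref{31}. The right-hand side becomes $\ip{\varphi_n}{X\psi_n}=\alpha_n\ip{\varphi_n}{\psi_n}=\alpha_n$ by (1) of Lemma \ref{thelemma} and biorthogonality. For the left-hand side, the four identities above give
\[ \ip{T\varphi_n}{S\ad\psi_n}-\ip{S\varphi_n}{T\ad\psi_n}=\gamma_n\overline{\tilde\gamma_n}\ip{\varphi_{n+1}}{\psi_{n+1}}-\overline{\tilde\gamma_{n-1}}\gamma_{n-1}\ip{\varphi_{n-1}}{\psi_{n-1}},\]
which collapses by biorthogonality to $\gamma_n\overline{\tilde\gamma_n}-\gamma_{n-1}\overline{\tilde\gamma_{n-1}}$, proving (3).

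There is no real obstacle: every step is an application of Propositions \ref{prop1}--\ref{prop2} followed by biorthogonality. The only point requiring mild care is the boundary case $n=0$, where one must consistently adopt the conventions $\gamma_{-1}=\tilde\gamma_{-1}=0$ forced by $\beta_0=\tilde\beta_0=0$ in Propositions \ref{prop1}--\ref{prop2}.
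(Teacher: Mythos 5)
Your proposal is correct and follows essentially the same route as the paper: parts (1) and (2) are the direct ladder computations from Propositions \ref{prop1} and \ref{prop2} (which the paper leaves to the reader), and part (3) is obtained exactly as in the paper by evaluating both sides of \eqref{31} on the diagonal pair $(\varphi_n,\psi_n)$ and using biorthogonality, with the same boundary convention $\gamma_{-1}=\tilde\gamma_{-1}=0$.
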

\begin{proof}
We will only prove here statement (3), since the others are easy consequences of the previous Propositions. First of all, from Propositions
\ref{prop1} and \ref{prop2}, we get $\ip{T\varphi_n}{S^\dagger\psi_m}=\gamma_n\overline{\tilde\gamma_n}\delta_{n,m}$. Moreover {$$
\ip{T\varphi_n}{S^\dagger\psi_m}=\ip{\varphi_n}{X\psi_m}+\ip{S\varphi_n}{T\ad\psi_m}=(\alpha_n+\gamma_{n-1}\overline{\tilde\gamma_{n-1}})\delta_{n,m},
$$
where we have used $X^*\varphi_n=\alpha_n\varphi_n$}. Hence (3) follows.

\end{proof}

\vspace{2mm}

\begin{rem} Since $\alpha_n\geq0$ for all $n$, then { $\gamma_{n}\overline{\tilde\gamma_{n}}- \gamma_{n-1}\overline{\tilde\gamma_{n-1}}\geq 0$}, for every $n\geq0$. Also,
since $\gamma_{-1}=\tilde\gamma_{-1}=0$, we  find that $\alpha_0=\gamma_{0}\overline{\tilde\gamma_{0}}$, and that, for all $n\geq1$, \be
\alpha_n=\gamma_{n}\overline{\tilde\gamma_{n}}-\sum_{k=0}^{n-1}\alpha_k, \label{32}\en which provides a relation between the $\alpha_k$'s and
the $\gamma_k$'s, $\tilde\gamma_k$'s introduced previously.
\end{rem}

{Propositions \ref{prop1} and \ref{prop2} are  related, in a sense, by the following}

\begin{prop}
Suppose that  $S\varphi_0=0$, $T^\dagger\psi_0=0$,  and that all the $\alpha_k$'s are different. {Then $T\varphi_n=\gamma_n\varphi_{n+1}$, for
every $n \in {\mb N}$, if, and only if, $S\varphi_n=\overline{\tilde\gamma_{n-1}}\varphi_{n-1}$, for every $n\geq 1$.}
\end{prop}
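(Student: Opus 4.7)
The approach is to test the weak nonlinear commutation relation
$$\ip{T\xi}{S^\dagger \eta} - \ip{S\xi}{T^\dagger \eta} = \ip{\xi}{X\eta}$$
on the biorthogonal pair $\xi = \varphi_n$, $\eta = \psi_m$. Since $X\psi_m = \alpha_m \psi_m$ and $\ip{\varphi_n}{\psi_m} = \delta_{n,m}$, this yields the master identity
$$\ip{T\varphi_n}{S^\dagger \psi_m} - \ip{S\varphi_n}{T^\dagger \psi_m} = \alpha_m \delta_{n,m}.$$
Each implication will be obtained by feeding in Proposition \ref{prop1} or Proposition \ref{prop2} and reading the resulting identity as a recursion on the biorthogonal coefficients of $S\varphi_n$ (respectively $T\varphi_n$) in the basis $\{\varphi_k\}$; the recursion will force all but one coefficient to vanish.

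For the direction $\Rightarrow$, assume $T\varphi_n = \gamma_n \varphi_{n+1}$. By Proposition \ref{prop1}, also $T^\dagger \psi_m = \overline{\gamma_{m-1}}\psi_{m-1}$ for $m\geq 1$ and $T^\dagger \psi_0 = 0$. Setting $s_{n,m} := \ip{S\varphi_n}{\psi_m}$ and substituting into the master identity gives
$$\gamma_n s_{n+1,m} - \overline{\gamma_{m-1}}\, s_{n,m-1} = \alpha_m \delta_{n,m}, \qquad m \geq 0,$$
with the convention $\overline{\gamma_{-1}} := 0$. Together with the boundary $s_{0,m} = 0$ (from $S\varphi_0 = 0$), a straightforward induction on $n$ shows that $s_{n,m} = 0$ whenever $m \neq n-1$: once $s_{n,m-1} = 0$ is known for all $m - 1 \neq n - 1$, the right-hand side of the recursion vanishes unless $m = n$, and one may then solve for $s_{n+1,m}$. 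Since $\{\varphi_k\}$ is a Schauder basis biorthogonal to $\{\psi_k\}$, the expansion $S\varphi_n = \sum_m s_{n,m}\varphi_m$ collapses to $S\varphi_n = s_{n,n-1}\varphi_{n-1}$ for $n\geq 1$; Proposition \ref{prop2} then lets me rewrite the surviving coefficient as $\overline{\tilde\gamma_{n-1}}$.

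The direction $\Leftarrow$ is symmetric. Starting from $S\varphi_n = \overline{\tilde\gamma_{n-1}}\varphi_{n-1}$ and invoking Proposition \ref{prop2} to obtain $S^\dagger \psi_m = \tilde\gamma_m \psi_{m+1}$, the same substitution in the master identity produces, with $t_{n,m} := \ip{T\varphi_n}{\psi_m}$,
$$\tilde\gamma_m t_{n,m+1} - \overline{\tilde\gamma_{n-1}}\, t_{n-1,m} = \alpha_m \delta_{n,m},$$
together with the boundary data $t_{n,0} = \ip{\varphi_n}{T^\dagger \psi_0} = 0$. An analogous induction forces $t_{n,m} = 0$ for $m \neq n+1$ and collapses the basis expansion of $T\varphi_n$ to the single term $\gamma_n \varphi_{n+1}$, where $\gamma_n := t_{n,n+1}$.

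The main technical point, already flagged in the proof of Proposition \ref{prop1}, is the possibility that some $\gamma_n$ (resp.\ $\tilde\gamma_n$) vanishes, which prevents division at that step of the recursion. In that degenerate case the recursion simply returns $T\varphi_n = 0$ (resp.\ $S\varphi_{n+1} = 0$), which still fits the target formula with a zero coefficient on the corresponding side of the equivalence; adopting the convention used in the proof of Proposition \ref{prop1}---reducing to the case in which all $\gamma_n$'s (or all $\tilde\gamma_n$'s) are nonzero---disposes of this issue. Beyond this bookkeeping, no new analytical difficulty is expected: the argument only uses the biorthogonal expansion of an arbitrary element of $\mathcal{H}$ and a one-variable induction.
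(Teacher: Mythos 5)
Your argument is correct and reaches the conclusion by a genuinely different route from the paper's. The paper's proof is a direct induction on $n$ at the level of vectors: from $\varphi_{n+1}=\gamma_n^{-1}T\varphi_n$ it computes $S\varphi_{n+1}=\gamma_n^{-1}(X+TS)\varphi_n$, feeds the induction hypothesis into $TS\varphi_n$, and identifies the resulting scalar via item (3) of Corollary \ref{cor47}, i.e. $\alpha_n=\gamma_n\overline{\tilde\gamma_n}-\gamma_{n-1}\overline{\tilde\gamma_{n-1}}$. You never compose the operators: you stay entirely in the weak formulation of Definition \ref{def1}, test it on the pairs $(\varphi_n,\psi_m)$, use the equivalence (2)$\Leftrightarrow$(3) of Proposition \ref{prop1} (resp.\ Proposition \ref{prop2}) to transport the hypothesis to the dual basis, and run the induction on the matrix elements $s_{n,m}=\ip{S\varphi_n}{\psi_m}$, killing all off-diagonal entries and then using the basis property of $\F_\varphi$ with coefficient functionals $\{\psi_m\}$ to collapse the expansion. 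What your version buys is that it avoids the domain bookkeeping implicit in writing $ST\varphi_n$ and $TS\varphi_n$, and it extracts the surviving coefficient from the recursion itself (your $\gamma_n s_{n+1,n}=\alpha_n+\gamma_{n-1}s_{n,n-1}$ is precisely the relation of Corollary \ref{cor47}(3)); what the paper's version buys is brevity and no appeal to the action of $T^\dagger$ on the dual basis.

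Two small points. First, with the paper's convention (the inner product is linear in the first argument, cf.\ $(\psi_l\otimes\overline{\varphi_l})\xi=\ip{\xi}{\varphi_l}\psi_l$), the scalars pulled out of the second slot should be $\gamma_{m-1}$ and $\overline{\tilde\gamma_m}$ rather than their conjugates; this does not affect which entries vanish, only the value of the surviving one, which you in any case re-identify through Proposition \ref{prop2}. Second, your remark that in the degenerate case ``the recursion simply returns $T\varphi_n=0$'' is not accurate: if $\gamma_{n_0}=0$ the recursion gives no information whatsoever about $s_{n_0+1,m}$, so the induction genuinely stops there rather than yielding a zero answer. The fallback you adopt --- restricting to nonzero $\gamma_n$'s as in the proof of Proposition \ref{prop1} --- is, however, exactly the tacit assumption under which the paper's own proof divides by $\gamma_0$ and $\gamma_n$, so your argument is no less complete than the original on this point.
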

\begin{proof}
We use induction on $n$. Let us first suppose that $T\varphi_n=\gamma_n\varphi_{n+1}$, for all $n\geq0$. Then, in particular,
$\varphi_1=\frac{1}{\gamma_0}\,T\varphi_0$. Now, taking $f\in\H$,
$$
\ip{S\varphi_1}{f}=\frac{1}{\gamma_0}\,\ip{ST\varphi_0}{f}=\frac{\alpha_0}{\gamma_0}\,\ip{\varphi_0}{f}=\ip{\overline{\tilde\gamma_0}\varphi_0}{f},
$$
so that, because of the arbitrariness of $f$, $S\varphi_1=\overline{\tilde\gamma_0}\varphi_0$. 

Let us now assume that $S\varphi_n=\overline{\tilde\gamma_{n-1}}\varphi_{n-1}$. We want to check that
$S\varphi_{n+1}=\overline{\tilde\gamma_{n}}\varphi_{n}$. In fact, from the hypothesis, we deduce that
$\varphi_{n+1}=\frac{1}{\gamma_n}\,T\varphi_n$. Therefore, {$$
S\varphi_{n+1}=\frac{1}{\gamma_n}\,S\,T\varphi_n=\frac{1}{\gamma_n}\,\left(X+T\,S\right)\varphi_n.
$$}
Now, recall that $X\varphi_n=\alpha_n\varphi_n$. Moreover, using the induction hypothesis, we have
$T\,S\,\varphi_n=\overline{\tilde\gamma_{n-1}}\,T\,\varphi_{n-1}=\overline{\tilde\gamma_{n-1}}\,\gamma_{n-1}\,\varphi_{n}$.
Hence, by (3) of Corollary \ref{cor47}, we conclude that $S\varphi_{n+1}=\overline{\tilde\gamma_{n}}\varphi_{n}$.

The inverse implication can be proved in a similar way.

\end{proof}

\vspace{2mm}

\begin{rem} A similar result can be proved for $T^\dagger$ and $S^\dagger$ in connection with $\F_\Psi$.

\end{rem}

Adopting the standard notation for nonlinear coherent states we call $\gamma_n!=\gamma_0\,\gamma_1\cdots\gamma_n$ and
$\tilde\gamma_n!=\tilde\gamma_0\,\tilde\gamma_1\cdots\tilde\gamma_n$. Iterating the formulas in the previous Propositions it is easy to find
that
$$
\varphi_n=\frac{1}{\gamma_n!}\,T^n\,\varphi_0,\qquad \psi_n=\frac{1}{\tilde\gamma_n!}\,(S^\dagger)^n\,\psi_0.
$$
All the above formulas provide a rather natural interpretation of $T$, $S$ and their adjoints as lowering and rising operators with respect to
two different bases, as usually done in pseudo-hermitian quantum mechanics (see \cite{nlpb} and references therein). {In that framework, moreover, the existence of of some {\em intertwining
operators} plays an important role. In a weaker sense they can be introduced also here.

Indeed, we can define two operators $S_\varphi$ and $S_\psi$, via their action on two generic vectors, $f\in D(S_\varphi)$ and
$g\in D(S_\psi)$, as follows
$$
S_\varphi\,f=\sum_{k=0}^\infty\ip{f}{\varphi_k}\,\varphi_k, \qquad S_\psi\,g=\sum_{k=0}^\infty\ip{g}{\psi_k}\,\psi_k.
$$
These operators are densely defined (since $S_\varphi\,\psi_k=\varphi_k$ and $S_\psi\,\varphi_k=\psi_k$,
$\forall\,k\in {\mb N}$) and positive. Moreover,  $S_\varphi\,S_\psi\varphi_k=\varphi_k$ and $S_\psi\,S_\varphi\psi_k=\psi_k$, for
all $k$. If, in addition, they are bounded, then they are inverses of each other; i.e., $S_\psi=S_\varphi^{-1}$, as it happens when $\F_\varphi$
and $\F_\psi$ are Riesz bases. In our case the following {\em weak intertwining
relations} hold:}
$$
N_l\,S_\varphi\,\psi_k=S_\varphi\,N_l^\#\psi_k,\qquad N_r\,S_\varphi\,\psi_k=S_\varphi\,N_r^\#\psi_k, \quad \forall k \in {\mb N}
$$
and
$$
N_l^\#\,S_\psi\,\varphi_k=S_\psi\,N_l\varphi_k,\qquad N_r^\#\,S_\psi\,\varphi_k=S_\psi\,N_r\varphi_k, \quad \forall k \in {\mb N}.
$$
 The existence of these relations is not surprising, since it is clearly related to the fact that, for instance, $N_l$ and $N_l^\#$
have equal eigenvalues.

\subsection{Connections with the linear case}

We end the paper by discussing some relations between the present situation, i.e. the nonlinear case, with the one discussed in
\cite{bit_jmp2011} and in  Sections 2 and 3. In particular, we will show that our previous results could be considered as special cases of the
present settings.

The starting point is Definition \ref{def1}. In order to recover here similar results to those obtained in \cite{bit_jmp2011}, we assume
that a non zero vector $\Phi$ does exist in $\D$ which is annihilated by $S$, $S\Phi=0$, and such that $T^k\Phi$ exists and is an eigenvector
of $X^*$: $X^*(T^k\Phi)=\mu_k(T^k\Phi)$. Under these assumptions we can check that $T^k\Phi$ is an eigenvector of ${S^\dagger}^*\,T$ with
eigenvalue $M_k:=\mu_0+\mu_1+\cdots+\mu_k$, and that $T^{k+1}\Phi$ is an eigenvector of $T\,{S^\dagger}^*$ with the same eigenvalue, $M_k$. The
proof, which can be given by induction on $k$, is easy and will not be given here. {It is worth remarking} that these
assumptions are satisfied whenever we are in the situation briefly considered at the beginning of Section 4. In fact, in this case, it is
enough to take $\Phi=\varphi_0$ and $T=b$. Hence, since $T^k\Phi=b^k\varphi_0=\sqrt{\epsilon_k!}\varphi_k$, using the explicit expression for
$X$ we conclude that $X^*(T^k\Phi)=\alpha_k(T^k\Phi)$.

Secondly, we give  the following result

{\begin{prop} Let us assume that $S,T\in\LDH$ satisfy Definition \ref{def1}. Let $n \in {\mb N}$, $n\geq 1$, and   $\xi$  a vector in $\D$ such
that  $T^k\xi\in\D$ for $k\leq n$. Then,
\begin{itemize}

\item[(i)] $S\xi\in D(({T^\dagger}^*)^k)$, for $k\leq n$

\item[(ii)] $X^*T^l\xi\in D(({T^\dagger}^*)^m)$, for all $l, m$ such that $l+m=k-1$, $k\leq n$
\item[(iii)] the following equality holds
\be ST^k\xi-({T\das})^kS\xi=\sum_{l=0}^{k-1}({T^\dagger}^*)^{k-1-l}\,X^*\,T^l\xi, \quad \forall k\leq n. \label{33}\en
\end{itemize}
\end{prop}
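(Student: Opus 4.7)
The plan is to prove (i), (ii), (iii) simultaneously by induction on $k$. The key tool is the following operator form of Definition~\ref{def1}: for any $\zeta\in\D$ such that $T\zeta\in\D$ (and $\zeta\in D(X^*)$, tacitly assumed throughout), the identity \eqref{31}, rewritten via $\ip{T\zeta}{S^\dagger\eta}=\ip{ST\zeta}{\eta}$ and $\ip{\zeta}{X\eta}=\ip{X^*\zeta}{\eta}$, reads
\[
\ip{ST\zeta-X^*\zeta}{\eta}=\ip{S\zeta}{T^\dagger\eta},\qquad \forall\,\eta\in\D.
\]
Density of $\D$ in $\H$ forces $S\zeta\in D(T\das)$ and
\[
ST\zeta=T\das(S\zeta)+X^*\zeta, \qquad (\star)
\]
which by the hypothesis on $\xi$ may be applied to $\zeta=T^j\xi$ for every $0\leq j\leq n-1$. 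The base case $k=1$ is $(\star)$ with $\zeta=\xi$; statement (ii) is vacuous.

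For the inductive step from $k$ to $k+1$, apply $(\star)$ with $\zeta=T^k\xi$ to obtain $ST^k\xi\in D(T\das)$ and
\[
ST^{k+1}\xi=T\das(ST^k\xi)+X^*T^k\xi.
\]
Inserting the inductive formula (iii$_k$) for $ST^k\xi$ and distributing $T\das$ term by term yields
\[
ST^{k+1}\xi=(T\das)^{k+1}S\xi+\sum_{l=0}^{k}(T\das)^{k-l}X^*T^l\xi,
\]
which is (iii$_{k+1}$). Distributing $T\das$ is legitimate precisely when each summand in the expansion of $ST^k\xi$ lies in $D(T\das)$: this amounts to (i$_{k+1}$) for the $(T\das)^k S\xi$ summand and to the $l\leq k-1$ cases of (ii$_{k+1}$) for the others; the $l=k$ case of (ii$_{k+1}$) is trivial since $X^*T^k\xi\in\H$.

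To verify (i$_{k+1}$) and (ii$_{k+1}$) individually---a sum belonging to $D(T\das)$ does not by itself place each summand there---I would run a secondary descending induction on $l$, exploiting the rearrangement of $(\star)$ at the $l$-th level,
\[
X^*T^l\xi=ST^{l+1}\xi-T\das(ST^l\xi),\qquad 0\leq l\leq n-1.
\]
Iterated use of $(\star)$ together with the identities (iii$_j$) at lower $j$ converts the domain information for $ST^{l+1}\xi$ and $ST^l\xi$ into the required membership $X^*T^l\xi\in D((T\das)^{k-l})$; (i$_{k+1}$) then follows by isolating $(T\das)^{k+1}S\xi$ in (iii$_{k+1}$). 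The principal obstacle is precisely this bookkeeping of individual domain memberships, which cannot be dispensed with even though, once in place, (iii$_{k+1}$) falls out in one step.
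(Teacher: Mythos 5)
Your reduction of Definition \ref{def1} to the pointwise identity $ST\zeta = T\das S\zeta + X^*\zeta$ (for $\zeta\in\D$ with $T\zeta\in\D$) is exactly the right algebraic core, and your base case agrees with the paper's. The gap is in the inductive step, and you have in fact named it yourself: after writing $ST^{k+1}\xi = T\das(ST^k\xi) + X^*T^k\xi$ and substituting (iii$_k$), you must distribute $T\das$ over the sum, which requires each summand --- not merely their total --- to lie in $D(T\das)$; but those individual memberships are precisely (i$_{k+1}$) and (ii$_{k+1}$), the statements you are trying to prove. The ``secondary descending induction on $l$'' does not break this circle as sketched: the rearrangement $X^*T^l\xi = ST^{l+1}\xi - T\das(ST^l\xi)$ only places $X^*T^l\xi$ in $D((T\das)^{k-l})$ if you already know $ST^{l+1}\xi\in D((T\das)^{k-l})$ and $ST^l\xi\in D((T\das)^{k-l+1})$, i.e.\ domain statements for $ST^j\xi$ with $j\geq 1$ that are not part of your induction hypothesis, which covers only $j=0$ via (i).

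The paper sidesteps this by setting up the induction differently: it inducts on $n$ and, for the step to $n+1$, applies the full proposition at level $n$ to the shifted vector $\xi'=T\xi$, which satisfies the hypotheses with $n$ in place of $n+1$. The induction hypothesis then hands over exactly the missing domain facts, $ST\xi\in D((T\das)^k)$ for $k\leq n$ and $X^*T^{l+1}\xi\in D((T\das)^m)$ for $l+m=k-1$, and the recursion peels $T$ off from the bottom rather than the top: $ST^{n+1}\xi = ST^n(T\xi) = (T\das)^nS(T\xi)+\sum_{l=0}^{n-1}(T\das)^{n-1-l}X^*T^{l+1}\xi$, with the base identity $S(T\xi)=X^*\xi+T\das S\xi$ substituted at the innermost level. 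To repair your argument along your own lines you would have to strengthen the induction hypothesis to include $ST^j\xi\in D((T\das)^p)$ for all $j+p\leq k$ --- which is precisely what quantifying over the shifted vectors $T^j\xi$ accomplishes automatically.
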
}
\begin{proof}
The proof is given by induction on $n$.

{For $n=1$ the statements follow immediately from Definition \ref{def1}.

Let us assume that (i), (ii), (iii) hold for $n$ and let $k\leq n+1$. If $k\leq n$, then the statements follow by the induction assumptions.
Thus we need to prove it only for $k=n+1$. Assume then that also $T^{n+1}\xi\in \D$. Then the vector $\xi'=T\xi \in \D$ satisfies $T^k\xi'\in\D$
for $k\leq n$. Thus the induction assumptions apply to $\xi'$. Therefore $ST\xi \in D(({T^\dagger}^*)^k)$, for $k\leq n$; $X^*T^{l+1}\xi\in
D(({T^\dagger}^*)^m)$, for all $l, m$ such that $l+m=k-1$, $k\leq n$. We prove that equation (\ref{33}) holds for $n+1$. Indeed, we have}
$$
ST^{n+1}\xi=ST^n(T\xi)=({T^\dagger}^*)^nS(T\xi)+\sum_{l=0}^{n-1}({T^\dagger}^*)^{n-1-l}X^*T^l(T\xi)=
$$
$$
=({T^\dagger}^*)^n\left(X^*\xi+{T^\dagger}^*\,S\,\xi\right)+\sum_{l=0}^{n-1}({T^\dagger}^*)^{n-1-l}X^*T^{l+1}\xi,
$$
from which formula (\ref{33}) for $n+1$ follows.
\end{proof}

{\bf Remarks:--} (1) Notice that, if $X=\Id$, i.e. if $\alpha_k=1$ for all $k$ in the definition of $X$, we recover Propositions 3.2 and 3.4 of
\cite{bit_jmp2011}. In particular, (\ref{33}) becomes $ST^k\xi-(T\das)^kS\xi=k\,T^{k-1}\xi$.

(2) If, rather than this, we simply assume that $[X^*,T]\xi=0$ for all $\xi\in\D$, and that $T^l(X^*\xi)\in\D$ for all $l$, the right-hand side
of (\ref{33}) becomes $kX^*T^{k-1}\xi$, which again, returns the previous result when $X=\Id$.

\section*{Acknowledgements}

This work was partially supported by the Japan Private School Promotion Foundation and partially by CORI, Universit\`a di Palermo. F.B. and
C.T. acknowledge the warm hospitality of the Department of Applied Mathematics of the Fukuoka University. F.B. also acknowledges partial
financial support by GNFM, Italy.

The authors wish to thank the {referee} for his very many useful comments.

\end{document}